\theoremstyle{plain}
\newtheorem{theorem}{Theorem}[section]
\newtheorem{lemma}[theorem]{Lemma}
\newtheorem*{mylemma}{Lemma}
\theoremstyle{definition}
\newtheorem{definition}[theorem]{Definition}
\newtheorem*{mydefinition}{Definition}
\newtheorem{fact}[theorem]{Fact}
\newcommand {\eps} {\varepsilon}
\newcommand {\br} [1] {\ensuremath{ \left( #1 \right) }}
\newcommand {\minusspace} {\: \! \!}
\newcommand {\smallspace} {\: \!}
\newcommand {\fn} [2] {\ensuremath{ #1 \minusspace \br{ #2 } }}
\newcommand {\ball} [2] {\fn{\mathcal{B}^{#1}}{#2}}
\newcommand {\defeq} {\ensuremath{ \stackrel{\mathrm{def}}{=} }}
\newcommand {\mutinf} [2] {\fn{\mathrm{I}}{#1 \smallspace : \smallspace #2}}
\newcommand {\imax} [2] {\fn{\mathrm{I}_{\max}}{#1 \smallspace : \smallspace #2}}
\newcommand {\imaxeps} [2] {\fn{\mathrm{I}^{\varepsilon}_{\max}}{#1 \smallspace : \smallspace #2}}
\newcommand {\condmutinf} [3] {\mutinf{#1}{#2 \smallspace \middle\vert \smallspace #3}}
\newcommand {\norm} [1] {\ensuremath{ \left\| #1 \right\| }}
\newcommand {\normsub} [2] {\ensuremath{ \norm{#1}_{#2} }}
\newcommand {\onenorm} [1] {\normsub{#1}{1}}
\newcommand {\ent} [1] {\fn{\mathrm{S}}{#1}}
\newcommand {\relent} [2] {\fn{\mathrm{D}}{#1 \middle\| #2}}
\newcommand {\dmax} [2] {\fn{\mathrm{D}_{\max}}{#1 \middle\| #2}}
\newcommand {\hmin} [2] {\fn{\mathrm{H}_{\min}}{#1 \middle | #2}}
\newcommand {\hmineps} [2] {\fn{\mathrm{H}^{\varepsilon}_{\min}}{#1 \middle | #2}}
\newcommand {\hmax} [2] {\fn{\mathrm{H}_{\max}}{#1 \middle | #2}}
\newcommand {\hmaxeps} [2] {\fn{\mathrm{H}^{\varepsilon}_{\max}}{#1 \middle | #2}}
\newcommand {\bra} [1] {\ensuremath{ \left\langle #1 \right| }}
\newcommand {\ket} [1] {\ensuremath{ \left| #1 \right\rangle }}
\newcommand {\ketbratwo} [2] {\ensuremath{ \left| #1 \middle\rangle \middle\langle #2 \right| }}
\newcommand {\ketbra} [1] {\ketbratwo{#1}{#1}}
\newcommand {\cspace} [1] {\ensuremath{\mathnormal{#1}}}
\newcommand {\Tr} {\ensuremath{ \mathrm{Tr} }}
\newcommand {\partrace} [2] {\fn{\Tr_{#1}}{#2}}
\newcommand {\Bob} {{\mathsf{Bob}}}
\newcommand {\Alice} {{\mathsf{Alice}}}
\newcommand {\Referee} {{\mathsf{Referee}}}
\newcommand {\suppress}[1]{}
\newcommand {\set} [1] {\ensuremath{ \left\lbrace #1 \right\rbrace }}
\newcommand {\reg} [1] {\ensuremath{ \mathnormal{#1} }}
\def\P{\CMcal{P}}
\def\Q{\CMcal{Q}}
\def\H{\CMcal{H}}
\def\F{\mathrm{F}}
\def\inf{\mathrm{inf}}
\def\max{\mathrm{max}}
\def\min{\mathrm{min}}
\def\E{\mathcal{E}}
\newcommand{\comment}[1]{\textup{{\color{red}#1}}}
\newcommand {\mytitle} {Quantum message compression with applications}
\newcommand {\Rahul}   {Rahul Jain}
\newcommand {\Anurag}  {Anurag Anshu}
\newcommand {\Vamsi} {Vamsi Krishna Devabathini}
\newcommand {\CQT} {Centre for Quantum Technologies}
\newcommand {\CQTCS} {\CQT{} and Department of Computer Science}
\newcommand {\NUS} {National University of Singapore}
\newcommand {\Maju} {MajuLab, CNRS-UNS-NUS-NTU International Joint Research Unit, UMI 3654, Singapore.}
\newcommand {\authorblock} [3] {
	\begin{minipage}[t]{0.3\linewidth}
		\centering
		{#1}\\[0.8ex]
		{\footnotesize {#2}\\[-0.7ex]
		\email{#3}}
	\end{minipage}\vspace{1ex}
}
\begin{document}

%\begin{titlepage}
\title{\textbf{\mytitle}\\[2ex]}

\author{
    \authorblock{\Anurag}{\CQT, \NUS}{a0109169@u.nus.edu}
	\authorblock{\Vamsi}{\CQT, \NUS}{devabathini92@gmail.com}
		\authorblock{\Rahul}{\CQTCS, \NUS \\ \Maju}{rahul@comp.nus.edu.sg}\\
}

%\clearpage

%\thispagestyle{empty}

%\begin{abstract}
%We show new bounds on the quantum communication cost of single-shot entanglement-assisted one-way quantum communication protocols for the {\em quantum state redistribution} task and for the sub-tasks {\em quantum state splitting} and {\em quantum state merging}. Our bounds are tighter than previously known best bounds for the latter two sub-tasks.

%A key technical tool that we use is a {\em convex-split} lemma which may be of independent interest. This differs from other achievability bounds for quantum state redistribution (\cite{Berta14,Oppenheim14}) and quantum state merging (\cite{Renner11}), which use decoupling by application of random unitary. Convex-split lemma is based on the fact that in a one-way quantum communication protocol, measurement by Alice leads to an ensemble of states on registers owned by Bob and Referee, convex combination of which is the original mixed state shared between Bob and Referee. Through convex-split lemma, we design one such ensemble of states and use it to construct a protocol for the task of quantum state redistribution.      
%\end{abstract}
%\end{titlepage}

\begin{abstract}
We present a new scheme for the compression of one-way quantum messages, in the setting of coherent entanglement assisted quantum communication. For this, we present a new technical tool that we call the {\em convex split} lemma, which is inspired by the classical compression schemes that use {\em rejection sampling} procedure. As a consequence, we show new bounds on the quantum communication cost of single-shot entanglement-assisted one-way {\em quantum state redistribution} task and for the sub-tasks {\em quantum state splitting} and {\em quantum state merging}.  Our upper and lower bounds are tight upto a constant and hence stronger than previously known best bounds for above tasks. Our protocols use explicit quantum operations on the sides of $\Alice$ and $\Bob$, which are different from the {\em decoupling by random unitaries} approach used in previous works. As another application, we present a port-based teleportation scheme which works when the set of input states is restricted to a known ensemble, hence potentially saving the number of required ports. Furthermore, in case of no prior knowledge about the set of input states, our average success fidelity matches the known average success fidelity, providing a new port-based teleportation scheme with similar performance as appears in literature. 
\end{abstract}

\maketitle

Quantum message compression is a fundamental area of quantum information theory. Schumacher\cite{Schumacher95} provided one of the first such schemes for source compression which was a direct quantum analogue of the celebrated classical source-coding scheme of Shannon \cite{Shannon}. To capture more general quantum tasks, such as those involving side information with the receiver, the work \cite{horodecki07} introduced the task of \textit{quantum state merging}. This work also provided an operational interpretation to quantum conditional entropy, the negativity of which is a well known example of peculiarities of quantum information. The task of \textit{quantum state splitting} was subsequently introduced in \cite{AbeyesingheDHW09}. A central theme in these results is the notion of a purifying system (often termed as the reference system), which brings an element of coherence in quantum protocols. Originally studied in the asymptotic and i.i.d setting, these tasks were also subsequently studied the one-shot setting \cite{Berta09,Renner11}.

The task of \textit{quantum state redistribution} elegantly captures the problem of coherent quantum message compression. In this task,
$\Alice$, $\Bob$ and $\Referee$ share a pure state $\ket{\Psi}_{RABC}$, with $AC$ belonging to $\Alice$, $B$ to $\Bob$ and $R$ to $\Referee$. $\Alice$ wants to transfer the register $C$ to $\Bob$, such that the final state $\Phi_{RABC}$ satisfies $\F(\Phi_{RABC},\Psi_{RABC})\geq \sqrt{1-\eps^2}$, for a given $\eps \geq 0$. Here $\F(.,.)$ is fidelity.

This problem has been well studied in the literature both in the asymptotic and single-shot settings (see e.g.~\cite{Devatakyard,oppenheim08,YeBW08,YardD09,Frederic10,BuscemiD10,DattaH11,DupuisBWR14,Oppenheim14} and references therein). Quantum state merging is a special case of this task when register $A$ is not present and quantum state splitting is the special case in which register $B$ is not present.  In the setting where Alice, Bob and Referee share $n$ copies of independent and identical states $\Psi_{RABC}^{\otimes n}$, it was shown by Devatak and Yard~\cite{Devatakyard,YardD09} (see also Luo and Devatak~\cite{LuoD09}) that the quantum communication cost, using one-way communication and shared-entanglement, for quantum state redistribution approaches $n\condmutinf{C}{R}{B}_{\Psi}$ as $n\rightarrow \infty$ and error $\eps\rightarrow 0$. Here, $\condmutinf{C}{R}{B}_{\Psi} = \ent{\Psi_{RB}}+\ent{\Psi_{BC}}-\ent{\Psi_B}-\ent{\Psi_{RBC}}$ is the quantum \textit{conditional mutual information} and $\ent{.}$ is the von-neumann entropy.  Subsequently, it was shown by Oppenheim~\cite{oppenheim08} that quantum state redistribution can be realized with two application of a protocol for quantum state merging. It was independently shown by Ye \textit{et.al.}~\cite{YeBW08} that quantum state redistribution can be realized with application of protocols for quantum state merging and quantum state splitting. 

Recently, in independent works by Berta, Christandl, Touchette~\cite{Berta14} and Datta, Hsieh, Oppenheim~\cite{Oppenheim14}, single-shot entanglement-assisted one-way protocols for quantum state redistribution have been proposed. The work ~\cite{Berta14} also provides several lower bounds with gaps between the upper and lower bounds and the question of closing these gaps has been left open. The upper bound of~\cite{Berta14} and~\cite{Oppenheim14} has recently been used by Touchette~\cite{Dave14} to obtain a {\em direct-sum result} for bounded-round entanglement-assisted quantum communication complexity.  

In this work, we introduce a novel technique for compressing coherent quantum information. As an immediate application, we exhibit a quantity that near-optimally captures the communication costs of quantum state re-distribution and present near optimal bounds for quantum state merging and quantum state splitting. We also give applications to the case of port- based teleportation, presenting schemes which allow port based teleportation when the states to be sent belong to a given ensemble. Our compression protocol has an important property of being explicit and using simple form of shared entanglement. Our techniques have also found recent application in the work \cite{Majenzetal}, in context of catalytic decoupling. We note that improved version of our main lemma, as presented below, also quantitatively improves one of the results in \cite{Majenzetal}.

\subsection{Preliminaries}

We represent the set of quantum states on a register $A$ with the symbol $\mathcal{D}(A)$. A subscript to a quantum state represents the register associated to it. Fidelity between states $\rho,\sigma$ is represented as $\F(\rho,\sigma) \defeq \|\sqrt{\rho}\sqrt{\sigma}\|_1$. We shall use the notation of \textit{epsilon ball}, representing $\ball{\eps}{\rho}$ as the set of all states $\sigma$ such that $\F^2(\rho,\sigma)\geq 1-\eps^2$. The \textit{relative entropy} between quantum states $\rho,\sigma$ is defined as $\relent{\rho}{\sigma} = \Tr(\rho\log\rho)-\Tr(\rho\log\sigma)$. The \textit{max-entropy} is defined as $\dmax{\rho}{\sigma}\defeq \inf\{\lambda: \rho \leq 2^{\lambda}\sigma\}$, where $A\leq B$ (for hermitian matrices $A,B$) implies that $B-A$ is a positive semidefinite matrix. The \textit{max-information} of a bipartite state $\rho_{AB}$ is defined as $\imax{A}{B}_{\rho}\defeq \inf_{\sigma\in \mathcal{D}(B)}\dmax{\rho_{AB}}{\rho_A\otimes \sigma_B}$. The \textit{smooth max-information is defined as} $\imaxeps{A}{B}_{\rho}\defeq \inf_{\rho'_{AB}\in \ball{\eps}{\rho_{AB}}}\imax{A}{B}_{\rho'}$.

\subsection{One way protocols and convex split}
\label{subsec:onewaysplit}

We begin with the following classical protocol for message compression. Alice and Referee share the mixed state $\Psi_{RA} = \sum_i p_i \ketbra{i}_R\otimes \ketbra{i}_A$ and Alice needs to send the register $A$ to Bob. A simple strategy using the technique of rejection sampling (see e.g.~\cite{HJMR10}) to achieve this task is that Alice and Bob share many copies of the state $\theta_{E_AE_B} \defeq \sum_i p_i \ketbra{i}_{E_A}\otimes \ketbra{i}_{E_B}$, and Alice sequentially checks in each copy whether the contents of registers $A$ and $E_A$ match. That is, she applies the projector $\sum_i \ketbra{i}_A\otimes \ketbra{i}_{E_A}$ and tries a fresh shared randomness upon failure. Upon success, she sends the index of the succeeding shared randomness, and Bob merely outputs his part of the shared randomness. A modification of this protocol also works when the state in register $A$ is a mixed classical/quantum state, for each $i$~\cite{HJMR10, Jain:2003, Jain:2005}. 

Unfortunately, the technique fails when the state $\Psi_{RA}$ is pure. The failure of the measurement leads to correlation between the register $R$ and parts of shared entanglement with Alice which disrupts the requirement that $R$ be correlated only with register output with Bob. To get around this problem, we design suitable operation on the Bob's side, and construct Alice's measurements in a coherent fashion. We outline our strategy more precisely below. 

Given a one-way communication protocol that achieves certain task, let the shared state between $\Alice~(A), \Bob~(B)$ and $\Referee~(R)$ be $\phi_{RAB}$. Here the registers $A,B$ denote all the registers with $\Alice$ and $\Bob$ respectively. Let a measurement $\{M_1,M_2\ldots\}$ be performed by $\Alice$. Then conditioned on an outcome $i$, state on registers $RB$ is $\Tr_{A}(M_i\phi_{RAB})$ (with slight abuse of notation for brevity). Thus, the measurement by $\Alice$ induces a \textit{convex-split} of the state $\phi_{RB}$ as follows:
$$\phi_{RB} = \sum_i \Tr_{A}(M_i\phi_{RAB}).$$ Upon receiving the message $i$ from $\Alice$, $\Bob$ `knows' that state on registers $R,B$ is $\Tr_{A}(M_i\phi_{RAB})$. They can perform further operations to finish the protocol.
   
Alternatively, given a convex-split of the state $\phi_{RB}$ as $\phi_{RB}=\sum_i p_i\phi^i_{RB}$, one can construct a one-way communication protocol as follows. Consider the following purification of $\phi_{RB}$: $$\ket{\phi}_{RBA'J}=\sum_i \sqrt{p_i}\ket{\phi^i}_{RBA'}\ket{i}_J.$$ Here $\ket{\phi^i}_{RBA'}$ is a purification of $\phi^i_{RB}$. $\Alice$ applies an isometry $V:A\rightarrow A'J$ to transform the shared state $\phi_{RAB}$ to $\phi_{RBA'J}$ and measures the register $J$. She then sends the measurement outcome to $\Bob$. Upon receiving outcome $i$, $\Bob$ `knows' that the state on registers $RB$ is $\phi^i_{RB}$, on which he can perform further operations.

We shall exploit this equivalence between convex-split and one-way communication protocols and construct a suitable convex-split corresponding to quantum state redistribution. Our idea is inspired by the aforementioned classical protocol, in which $\Bob$ simply outputs the correct register, receiving the message from Alice.

\begin{figure}[h]
\centering
\begin{tikzpicture}[xscale=0.9,yscale=1.2]

%\node at (0,5) {$\tau_{PQ_1Q_2\ldots Q_n}$};

%\node at (1.5,5) {$=$};

\node at (2.5,5) {$\frac{1}{n}$};

\draw[fill] (3.5,7) circle [radius=0.05];
\draw[fill] (3.5,8) circle [radius=0.05];
\draw[thick] (3.5,8) to [out=200,in=160] (3.5,7);
\node at (2.7,7.5) {$\rho_{PQ_1}$};
\node at (3.5,6.5) {$\otimes$};
\node at (2.9,6) {$\sigma_{Q_2}$};
\draw[fill] (3.5,6) circle [radius=0.05];
\node at (3.5,5.5) {$\otimes$};
\draw[fill] (3.5,5) circle [radius=0.05];
\node at (3.5,4.5) {$\otimes$};
\draw[fill] (3.5,4.1) circle [radius=0.02];
\draw[fill] (3.5,3.7) circle [radius=0.02];
\draw[fill] (3.5,3.3) circle [radius=0.02];
\draw[fill] (3.5,2.9) circle [radius=0.02];
\node at (3.5,2.5) {$\otimes$};
\draw[fill] (3.5,2) circle [radius=0.05];
\node at (2.9,2) {$\sigma_{Q_n}$};

\node at (4.5,5) {$+$};

\node at (5.5,5) {$\frac{1}{n}$};

\draw[fill] (6.5,7) circle [radius=0.05];
\draw[fill] (6.5,8) circle [radius=0.05];
\node at (7.1,7) {$\sigma_{Q_1}$};
\draw[thick] (6.5,8) to [out=200,in=160] (6.5,6);
\node at (5.5,7) {$\rho_{PQ_2}$};
\node at (6.5,6.5) {$\otimes$};
\draw[fill] (6.5,6) circle [radius=0.05];
\node at (6.5,5.5) {$\otimes$};
\draw[fill] (6.5,5) circle [radius=0.05];
\node at (6.5,4.5) {$\otimes$};
\draw[fill] (6.5,4.1) circle [radius=0.02];
\draw[fill] (6.5,3.7) circle [radius=0.02];
\draw[fill] (6.5,3.3) circle [radius=0.02];
\draw[fill] (6.5,2.9) circle [radius=0.02];
\node at (6.5,2.5) {$\otimes$};
\draw[fill] (6.5,2) circle [radius=0.05];
\node at (5.9,2) {$\sigma_{Q_n}$};

\draw[fill] (7.5,5.0) circle [radius=0.02];
\draw[fill] (7.9,5.0) circle [radius=0.02];
\draw[fill] (8.3,5.0) circle [radius=0.02];
\draw[fill] (8.7,5.0) circle [radius=0.02];

\node at (9.2,5) {$+$};

\node at (9.7,5) {$\frac{1}{n}$};

\draw[fill] (10.5,7) circle [radius=0.05];
\draw[fill] (10.5,8) circle [radius=0.05];
\draw[thick] (10.5,8) to [out=230,in=85] (10.1,7) to [out=265,in=95] (10.1,3) to [out=275,in=130] (10.5,2);
\node at (9.7,7.5) {$\rho_{PQ_n}$};
\node at (11.1,7) {$\sigma_{Q_1}$};
\node at (10.5,6.5) {$\otimes$};
\node at (11.1,6) {$\sigma_{Q_2}$};
\draw[fill] (10.5,6) circle [radius=0.05];
\node at (10.5,5.5) {$\otimes$};
\draw[fill] (10.5,5) circle [radius=0.05];
\node at (10.5,4.5) {$\otimes$};
\draw[fill] (10.5,4.1) circle [radius=0.02];
\draw[fill] (10.5,3.7) circle [radius=0.02];
\draw[fill] (10.5,3.3) circle [radius=0.02];
\draw[fill] (10.5,2.9) circle [radius=0.02];
\node at (10.5,2.5) {$\otimes$};
\draw[fill] (10.5,2) circle [radius=0.05];

\end{tikzpicture}
\caption{The state $\tau_{PQ_1Q_2\ldots Q_n}$ considered in convex-split lemma. For large enough $n$, $\tau_{PQ_1Q_2\ldots Q_n}$ is approximately equal to the state $\tau_P\otimes\tau_{Q_1Q_2\ldots Q_n}$.}
 \label{fig:convex-split}
\end{figure}
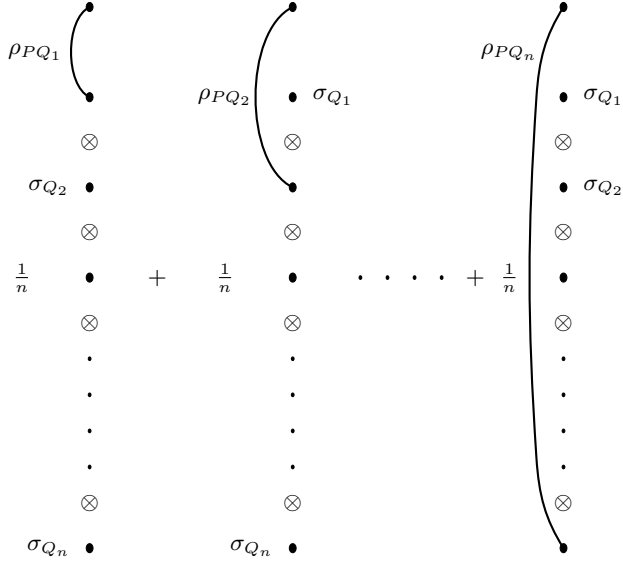

We shall show the following main lemma, which we apply to compression of quantum messages in next section.  

\begin{mylemma}[Convex-split lemma]
Let $\rho_{PQ}\in\mathcal{D}(PQ)$ and $\sigma_Q\in\mathcal{D}(Q)$ be quantum states such that $\text{supp}(\rho_Q)\subset\text{supp}(\sigma_Q)$.  Let $k \defeq \dmax{\rho_{PQ}}{\rho_P\otimes\sigma_Q}$. Define the following state (please also refer to  Figure~\ref{fig:convex-split})
\begin{eqnarray} \label{eq:convexsplit}
&& \tau_{PQ_1Q_2\ldots Q_n} \defeq  \frac{1}{n}\sum_{j=1}^n \rho_{PQ_j}\otimes\nonumber\\ &&\sigma_{Q_1}\otimes \sigma_{Q_2}\ldots\otimes\sigma_{Q_{j-1}}\otimes\sigma_{Q_{j+1}}\ldots\otimes\sigma_{Q_n}
\end{eqnarray}
on $n+1$ registers $P,Q_1,Q_2,\ldots Q_n$, where $\forall j \in [n]: \rho_{PQ_j} = \rho_{PQ}$ and $\sigma_{Q_j}=\sigma_Q$.  Then,  
$$ \F^{2}(\tau_{PQ_1Q_2\ldots Q_n},\rho_P \otimes \sigma_{Q_1}\otimes\sigma_{Q_2}\ldots \otimes \sigma_{Q_n}) \geq \frac{1}{1+\frac{2^k}{n}}.$$ In particular, for $\delta \in (0,1/3) $ and $ n\defeq \lceil\frac{2^k}{\delta}\rceil$, 
$$ \F^{2}(\tau_{PQ_1Q_2\ldots Q_n},\rho_P \otimes \sigma_{Q_1}\otimes\sigma_{Q_2}\ldots \otimes \sigma_{Q_n}) \geq 1 - \delta.$$ 
\end{mylemma}

\subsection{Compressing one-way quantum message}
Using the convex split lemma, we now present a scheme to compress an arbitrary quantum message.  Consider a protocol in which $\Alice(AM)$, $\Bob(B)$ and $\Referee(R)$ share a joint quantum state $\Phi_{RAMB}$ and Alice sends the register $M$ to $\Bob$. This protocol may appear as a sub-routine in any other quantum protocol. We shall show the following theorem.

\begin{theorem}
\label{thm:compression}
There exists a protocol $\P$ which starts with the state $\Phi_{RAMB}$ and produces a state $\Phi'_{RAMB}$ with the property $\F^2(\Phi_{RAMB},\Phi'_{RAMB})\geq 1-\eps^2$, such that the register $M$ is held by $\Bob$. The number of qubits communicated from $\Alice$ to $\Bob$ is at most $$\frac{1}{2}\imax{RB}{M}_{\Phi} + \log\left(\frac{1}{\eps}\right) .$$
\end{theorem} 

An outline of the proof is as follows. The compression of message register $M$ amounts to constructing a suitable convex split on the registers involved with $\Bob$ and $\Referee$.

For any $\sigma\in \mathcal{D}(M)$, we set $\delta=\eps^2$, $k= \dmax{\Phi_{RBM}}{\Phi_{RB}\otimes \sigma_M}$, $n=2^k/\delta$ and consider the state 
\begin{eqnarray*}
&& \tau_{RBM_1M_2\ldots M_n} \defeq \nonumber\\ && \frac{1}{n}\sum_{j=1}^n \Phi_{RBM_j}\otimes\sigma_{M_1}\ldots\otimes\sigma_{M_{j-1}}\otimes\sigma_{M_{j+1}}\ldots\otimes\sigma_{M_n}
\end{eqnarray*}

Suppose $\Alice$, $\Bob$ and $\Referee$ hold the following purification of  $\tau_{RBM_1M_2\ldots M_n}$, 
\begin{eqnarray*}
&& \ket{\tau}_{RBJL_1\ldots L_nM_1\ldots M_n} \defeq \frac{1}{\sqrt{n}}\sum_{j=1}^n \ket{\tilde{\Phi}}_{RBL_jM_j} \otimes\ket{j}_J\otimes\\ && \ket{\sigma}_{L_1M_1}\ldots\otimes\ket{\sigma}_{L_{j-1}M_{j-1}}\otimes\ket{\sigma}_{L_{j+1}M_{j+1}}\ldots\otimes\ket{\sigma}_{L_nM_n}
\end{eqnarray*}
where, $L_i$ is a register purifying $M_i$ and $\ket{\tilde{\Phi}}_{RBL_jM_j}$ is a purification of $\Phi_{RBM_j}$. Then $\Alice$ can send the register $J$ to $\Bob$ via superdense coding. Upon receiving the message $j$, Bob can pick up the register $M_j$ and Alice can purify the state $\Phi_{RBM_j}$ in her register $A$. The state in registers $RBM_jA$ is then the desired output.

Obvious issue with this scheme is that the parties do not possess the state $\tau_{RBAJL_1L_2\ldots L_nM_1M_2\ldots M_n}$. But observe that $\tau_{RB} = \Phi_{RB}$ and  $\tau_{RB} \otimes \sigma_{M_1}\otimes\sigma_{M_2}\ldots \otimes \sigma_{M_n}$ and $\tau_{RBM_1M_2\ldots M_n}$ are $2\delta$-close in fidelity (due to the convex split lemma). Thus, the parties can start with the state $\Phi_{RBAM}$ and the purification $\ket{\sigma}_{L_1M_1}\otimes \ldots \ket{\sigma}_{L_nM_n}$ of $ \sigma_{M_1}\otimes\sigma_{M_2}\ldots \otimes \sigma_{M_n}$ as the pre-shared entanglement and $\Alice$ can apply an isometry $V$ on her registers (guaranteed by Uhlmann's theorem \cite{uhlmann76}) such that the states $V(\ket{\Phi}_{RBAM} \otimes \ket{\sigma}_{L_1M_1}\otimes \ldots \ket{\sigma}_{L_nM_n})$ and $\ket{\tau'}_{RBAJL_1L_2\ldots L_nM_1M_2\ldots M_n}$ are $\delta$-close to each other in fidelity. $\Alice$ then sends the register $J$, leading to a state $\Phi'_{RBM_jA}$ shared between all three parties, such that $\F^2(\Phi'_{RBM_jA},\Phi_{RBM_jA})\geq 1-2\delta$. The number of qubits communicated by $\Alice$ is  
$\frac{\log(n)}{2}$. The theorem follows when we choose $\sigma$ such that $\dmax{\Phi_{RBM}}{\Phi_{RB}\otimes \sigma_M} = \imax{RB}{M}_{\Phi}$.

\subsection{Application: Quantum state redistribution}

An immediate application of our compression result is near optimal characterization of the task of quantum state redistribution. We begin with the case for quantum state splitting (in which register $B$ is trivial).  $\Referee(R)$ and $\Alice(AC)$ share the state $\Psi_{RAC}$ and $\Alice$ needs to send the register $C$ to $\Bob$. From our compression result, a protocol in which $\Alice$ simply sends the register $C$ to $\Bob$ can be compressed using a new protocol $\P$ which makes an error of at most $2\eps$, using the following number of qubits: 
$$\frac{1}{2}\imaxeps{R}{C}_{\Psi} + \log\left(\frac{2}{\eps}\right) .$$

It is known that any one-way entanglement assisted quantum protocol that makes an error at most $\eps$ must communicate at least $\frac{1}{2}\imaxeps{R}{C}_{\Psi}$ number of qubits \cite{Renner11}. Similar bounds also hold for quantum state merging (in which register $A$ is trivial), as quantum state merging can be viewed as a time-reversed version of quantum state splitting \cite{Renner11}. A slightly weaker form of our result was already known in \cite{Renner11}, where the protocol used $\frac{1}{2}\imaxeps{R}{C}_{\Psi} + \log\log\mathrm{dim}(C) + \mathcal{O}(\log\left(\frac{1}{\eps}\right)) $ qubits of communication and embezzling quantum states as pre-shared entanglement. On the other hand, the communication cost in our protocol differs from the lower bound by a constant and uses a simple form of pre-shared entanglement: sufficiently many copies of purifications of certain state $\sigma_C$, that is obtained in the optimization in $\imaxeps{R}{C}_{\Psi}$. 
 
We show a similar statement for the case of quantum state redistribution, where the communication cost is tighly characterized by the following quantity. 
\begin{mydefinition} Let $\varepsilon \geq 0$ and $\ket{\Psi}_{RABC}$ be a pure state. Define,
\begin{eqnarray*}  
&&\mathrm{Q}^{\varepsilon}_{\ket{\Psi}_{RABC}} \defeq \inf_{T, U_{BCT},\sigma_T,\kappa_{RBCT}} \imax{RB}{CT}_{\kappa_{RBCT}}  
\end{eqnarray*}
with the condition that $U_{BCT}$ is a unitary on registers $BCT$, $\sigma_T\in \mathcal{D}(T)$, $\kappa_{RB}=\Psi_{RB}$ and
$$(I_R \otimes U_{BCT}) \kappa_{RBCT}(I_R \otimes U^{\dagger}_{BCT}) \in \ball{\eps}{\Psi_{RBC}\otimes\sigma_T}.$$ 
\end{mydefinition}

It may be noted that in above quantity, we are not optimizing over all possible protocols. Rather, it quantifies (in terms of max-information) how well $\Bob$ can decouple the registers $RB$ and $CT$ using local operations and additional ancilla register $T$. In the special case where $B$ is trivial (for quantum state splitting), there is no additional ancillary register $T$ required by $\Bob$ for best possible decoupling and above quantity coincides with $\imaxeps{R}{C}_{\Psi}$.

We show that any one-way entanglement assisted quantum protocol achieving quantum state redistribution of the state $\Psi_{RABC}$ with error at most $\eps$ must communicate at least $\frac{1}{2}\mathrm{Q}^{\varepsilon}_{\ket{\Psi}_{RABC}}$ number of qubits. Furthemore, using our compression result, we exhibit a protocol that makes an error of at most $2\eps$ and communicates at most $\frac{1}{2}\mathrm{Q}^{\varepsilon}_{\ket{\Psi}_{RABC}} + \log\left(\frac{2}{\eps}\right) $ number of qubits. We leave further understanding of this quantity, to future work. 
 
\suppress{
\begin{figure}[ht]
\centering
\begin{tikzpicture}[xscale=1.0,yscale=1.2]

\draw[fill] (1.0,8.5) circle [radius=0.05];
\draw[fill] (0.5,7.5) circle [radius=0.05];
\draw[fill] (-0.5,7.5) circle [radius=0.05];
\draw[thick] (1.0,8.5) -- (0.5,7.5);
\draw[thick] (1.0,8.5) -- (-0.5,7.5);
\draw[thick] (-0.5,7.5) -- (0.5,7.5);
\node at (1.2,7.6) {$\ket{\rho}_{PAQ}$};

\draw[fill] (2.0,7) circle [radius=0.05];
\draw[fill] (0.0,6) circle [radius=0.05];
\draw[fill] (0.0,7) circle [radius=0.05];
\draw[fill] (2.0,6) circle [radius=0.05];
\draw[fill] (2.0,5) circle [radius=0.05];
\draw[fill] (0.0,5) circle [radius=0.05];
\draw[fill] (2.0,2) circle [radius=0.05];
\draw[fill] (0.0,2) circle [radius=0.05];
\draw[thick] (0,7) rectangle (2,2);

\draw[fill] (1.0,4.1) circle [radius=0.02];
\draw[fill] (1.0,3.7) circle [radius=0.02];
\draw[fill] (1.0,3.3) circle [radius=0.02];
\draw[fill] (1.0,2.9) circle [radius=0.02];

\node at (1.1,8.3) {P};
\node at (-0.5,7.7) {A};
\node at (0.5,7.3) {Q};
\node at (-0.3,7) {$A_1$};
\node at (-0.3,6) {$A_2$};
\node at (-0.3,5) {$A_3$};
\node at (-0.3,2) {$A_n$};
\node at (2.3,7) {$Q_1$};
\node at (2.3,6) {$Q_2$};
\node at (2.3,5) {$Q_3$};
\node at (2.3,2) {$Q_n$};

\node at (1,9) {$\Referee$};
\node at (-0.5,8.5) {$\Alice$};
\node at (2,8.5) {$\Bob$};

\node at (1,1.5) {$\ket{\tau}_{A_1A_2\ldots A_nQ_1Q_2\ldots Q_n}$};

%\draw[gray] (3.0,9)--(3.0,0);

\node at (4.8,4.5) {$\frac{1}{\sqrt{n}}$};

\draw[fill] (7.5,7) circle [radius=0.05];
\draw[fill] (6.5,8) circle [radius=0.05];
\draw[fill] (5.5,7) circle [radius=0.05];
\draw[thick] (6.5,8) -- (7.5,7);
\draw[thick] (5.5,7) -- (7.5,7);
\draw[thick] (6.5,8) -- (5.5,7);
\node at (6.6,7.3) {$\ket{\rho}_{PA_1Q_1}$};
\node at (7.5,6.5) {$\otimes$};
\node at (6.6,6.2) {$\ket{\sigma}_{A_2Q_2}$};
\draw[fill] (7.5,6) circle [radius=0.05];
\draw[fill] (5.5,6) circle [radius=0.05];
\draw[thick] (5.5,6)--(7.5,6);
\node at (7.5,5.5) {$\otimes$};
\draw[fill] (7.5,5) circle [radius=0.05];
\draw[fill] (5.5,5) circle [radius=0.05];
\draw[thick] (5.5,5)--(7.5,5);
\node at (7.5,4.5) {$\otimes$};
\draw[fill] (7.5,4.1) circle [radius=0.02];
\draw[fill] (7.5,3.7) circle [radius=0.02];
\draw[fill] (7.5,3.3) circle [radius=0.02];
\draw[fill] (7.5,2.9) circle [radius=0.02];
\node at (7.5,2.5) {$\otimes$};
\draw[fill] (7.5,2) circle [radius=0.05];
\draw[fill] (5.5,2) circle [radius=0.05];
\draw[thick] (5.5,2)--(7.5,2);
\node at (6.6,2.2) {$\ket{\sigma}_{A_nQ_n}$};
\node at (7.5,1.5) {$\otimes$};
\node at (7.5,1.0) {$\ket{1}_M$};

\node at (6.5,8.7) {$\Referee$};
\node at (5.5,8.2) {$\Alice$};
\node at (7.5,8.2) {$\Bob$};

\node at (6.5,8.2) {P};
\node at (5.2,7) {$A_1$};
\node at (5.2,6) {$A_2$};
\node at (5.2,5) {$A_3$};
\node at (5.2,2) {$A_n$};
\node at (7.8,7) {$Q_1$};
\node at (7.8,6) {$Q_2$};
\node at (7.8,5) {$Q_3$};
\node at (7.8,2) {$Q_n$};

\node at (8.3,4.5) {$+$};

\node at (8.9,4.5) {$\frac{1}{\sqrt{n}}$};

\draw[fill] (11.5,7) circle [radius=0.05];
\draw[fill] (10.5,8) circle [radius=0.05];
\draw[fill] (9.5,6) circle [radius=0.05];
\draw[fill] (9.5,7) circle [radius=0.05];
\draw[thick] (11.5,7) -- (11.05,7);
\draw[thick] (10.95,7) -- (10.05,7);
\draw[thick] (9.95,7) -- (9.5,7);
\draw[thick] (10.5,8) -- (11.5,6);
\draw[thick] (11.5,6) -- (9.5,6);
\draw[thick] (9.5,6) -- (10.5,8);
\node at (10.5,6.2) {$\ket{\rho}_{PA_2Q_2}$};
\node at (11.5,6.5) {$\otimes$};
\draw[fill] (11.5,6) circle [radius=0.05];
\node at (11.5,5.5) {$\otimes$};
\draw[fill] (11.5,5) circle [radius=0.05];
\draw[fill] (9.5,5) circle [radius=0.05];
\draw[thick] (9.5,5) -- (11.5,5);
\node at (11.5,4.5) {$\otimes$};
\draw[fill] (11.5,4.1) circle [radius=0.02];
\draw[fill] (11.5,3.7) circle [radius=0.02];
\draw[fill] (11.5,3.3) circle [radius=0.02];
\draw[fill] (11.5,2.9) circle [radius=0.02];
\node at (11.5,2.5) {$\otimes$};
\draw[fill] (11.5,2) circle [radius=0.05];
\draw[fill] (9.5,2) circle [radius=0.05];
\draw[thick] (9.5,2) -- (11.5,2);
\node at (10.5,2.2) {$\ket{\sigma}_{A_nQ_n}$};
\node at (11.5,1.5) {$\otimes$};
\node at (11.5,1.0) {$\ket{2}_M$};

\node at (10.5,8.7) {$\Referee$};
\node at (9.5,8.2) {$\Alice$};
\node at (11.5,8.2) {$\Bob$};

\node at (10.5,8.2) {P};
\node at (9.2,7) {$A_1$};
\node at (9.2,6) {$A_2$};
\node at (9.2,5) {$A_3$};
\node at (9.2,2) {$A_n$};
\node at (11.8,7) {$Q_1$};
\node at (11.8,6) {$Q_2$};
\node at (11.8,5) {$Q_3$};
\node at (11.8,2) {$Q_n$};

\draw[fill] (12.1,4.5) circle [radius=0.02];
\draw[fill] (12.5,4.5) circle [radius=0.02];
\draw[fill] (12.9,4.5) circle [radius=0.02];

\draw[->,thick] (2.9,4.5) -- (4.2,4.5);
\node at (3.5,4.7) {$V$};

\end{tikzpicture}
\caption{The state on left hand side $\ket{\rho}_{PAQ} \otimes \ket{\tau}_{A_1A_2\ldots A_n Q_1 \ldots Q_n}$, a purification of $\rho_{P} \otimes\tau_{Q_1 \ldots Q_n}$. The state on right hand side is $\ket{\tau'}_{PMA_1A_2\ldots A_nQ_1Q_2\ldots Q_n}$, a purification of $\tau_{PQ_1Q_2\ldots Q_n}$. Using convex-split lemma, Alice can apply an isometry $V$ on $\ket{\rho}_{PAQ} \otimes \ket{\tau}_{A_1A_2\ldots A_n Q_1 \ldots Q_n}$ to obtain $\ket{\tau'}_{PMA_1A_2\ldots A_nQ_1Q_2\ldots Q_n}$ with high fidelity.}
 \label{fig:convexexplanation}
\end{figure}
\bigskip
}

\subsection{Application: Port-based teleportation}
The work \cite{Portbased08} introduced the technique of Port-based teleportation, where $\Alice$ and $\Bob$ share many copies of maximally entangled states (called \textit{ports}), and upon receiving message from $\Alice$ (which she prepares after her local quantum operation), $\Bob$ simply picks up the desired state in one of the ports. It was shown in \cite{Portbased09} that there is a Port-based teleportation scheme for $d$-dimensional quantum states that uses $N$ copies of $d$-dimensional maximally entangled states and achieves average squared-fidelity of transmission at least $1- \frac{d^2}{N}$. 

Using the convex split lemma, we provide a scheme for Port-based teleportation in the presence of side information about the set of input states. Given a collection of states on a register $M$ and associated probabilities $\{p_i,\psi^i_M\}_i$, define the state $\ket{\Psi}_{RM} \defeq \sum_i \sqrt{p_i}\ket{i}_R\ket{\psi^i}_M$ (where $R$ is a register with sufficient large dimension). Then we present a protocol where $\Alice$ and $\Bob$ share $N$ copies of purification of an arbitrary state $\sigma_M$ and perform a port-based teleportation protocol for which the average squared-fidelity is at least $1- \left( 2^{\dmax{\Psi_{RM}}{\Psi_R\otimes\sigma_M}}/N \right)$. In particular, if $\psi^i_M$ form the set of all possible quantum states on $d$ dimensional register $M$ with uniform distribution, then choosing $\sigma_M= \frac{\mathrm{I}_M}{d}$, we find that $\dmax{\Psi_{RM}}{\Psi_R\otimes\sigma_M} = 2\log(d)$. Thus, this scheme achieves the average squared-fidelity at least $1-\frac{d^2}{N}$ for port-based teleportation of an arbitrary state, similar to the result obtained in \cite{Portbased09}. But one can obtain better average squared-fidelity for a different collection $\{p_i,\psi^i_M\}_i$, since it always holds that $\inf_{\sigma_M}\dmax{\Psi_{RM}}{\Psi_R\otimes\sigma_M} \leq 2\log(d)$ (as shown in \cite{Renner11}). 

\subsection{Conclusion}

We have presented a new protocol for compressing one-way coherent quantum messages upto the {\em max information} between the message and joint system between receiver and reference. As a consequence, we have obtained optimal quantities characterizing the quantum communication cost of quantum state merging and quantum state splitting tasks. We have also exhibited a similar quantity for the task of quantum state redistribution, although it remains to better understand the optimization in it. We have also exhibited a port-based teleportation scheme that can potentially save the number of ports in presence of further information about the ensemble of the states to be teleported. 
  
\section*{Acknowledgment} 
We thank Ashwin Nayak, Nilanjana Datta, Marco Tomamichel and Mark~M. Wilde for helpful comments on previous versions of this paper and also for pointing us to many useful references. We thank Debbie Leung for pointing out the connection with Port-based teleportation. We thank Priyanka Mukhopadhyay and Naqueeb Warsi for helpful discussions. This work is supported by the Singapore Ministry of Education and the National Research Foundation, also through the Tier 3 Grant ``Random numbers from quantum processes'' MOE2012-T3-1-009. 

\bibliographystyle{apsrev4-1}
\bibliography{state-redistribution}
\newpage
\appendix

\onecolumngrid
\section{Preliminaries}
\label{sec:Preliminaries}

In this section we present some notations, definitions, facts and lemmas that we will use later in our proofs. Readers may refer to~\cite{CoverT91,NielsenC00,Watrouslecturenote} for good introduction to classical and quantum information theory.

\subsection*{Information theory}

Consider a finite dimensional Hilbert space $\H$ endowed with an inner product $\langle \cdot, \cdot \rangle$ (in this paper, we only consider finite dimensional Hilbert-spaces). The $\ell_1$ norm of an operator $X$ on $\H$ is $\onenorm{X}\defeq\Tr\sqrt{X^{\dag}X}$ and $\ell_2$ norm is $\norm{X}_2\defeq\sqrt{\Tr XX^{\dag}}$. A quantum state (or a density matrix or a state) is a positive semi-definite matrix on $\H$ with trace equal to $1$. It is called {\em pure} if and only if its rank is $1$. A sub-normalized state is a positive semi-definite matrix on $\H$ with trace less than or equal to $1$. Let $\ket{\psi}$ be a unit vector on $\H$, that is $\langle \psi,\psi \rangle=1$.  With some abuse of notation, we use $\psi$ to represent the state and also the density matrix $\ketbra{\psi}$, associated with $\ket{\psi}$. Given a quantum state $\rho$ on $\H$, {\em support of $\rho$}, called $\text{supp}(\rho)$ is the subspace of $\H$ spanned by all eigen-vectors of $\rho$ with non-zero eigenvalues.
 
A {\em quantum register} $A$ is associated with some Hilbert space $\H_A$. Define $|A| \defeq \dim(\H_A)$. Let $\mathcal{L}(A)$ represent the set of all linear operators on $\H_A$. We denote by $\mathcal{D}(A)$, the set of quantum states on the Hilbert space $\H_A$. State $\rho$ with subscript $A$ indicates $\rho_A \in \mathcal{D}(A)$. If two registers $A,B$ are associated with the same Hilbert space, we shall represent the relation by $A\equiv B$.  Composition of two registers $A$ and $B$, denoted $AB$, is associated with Hilbert space $\H_A \otimes \H_B$.  For two quantum states $\rho\in \mathcal{D}(A)$ and $\sigma\in \mathcal{D}(B)$, $\rho\otimes\sigma \in \mathcal{D}(AB)$ represents the tensor product (Kronecker product) of $\rho$ and $\sigma$. The identity operator on $\H_A$ (and associated register $A$) is denoted $I_A$. 

Let $\rho_{AB} \in \mathcal{D}(AB)$. We define
\[ \rho_{\reg{B}} \defeq \partrace{\reg{A}}{\rho_{AB}}
\defeq \sum_i (\bra{i} \otimes I_{\cspace{B}})
\rho_{AB} (\ket{i} \otimes I_{\cspace{B}}) , \]
where $\set{\ket{i}}_i$ is an orthonormal basis for the Hilbert space $\H_A$.
The state $\rho_B\in \mathcal{D}(B)$ is referred to as the marginal state of $\rho_{AB}$. Unless otherwise stated, a missing register from subscript in a state will represent partial trace over that register. Given a $\rho_A\in\mathcal{D}(A)$, a {\em purification} of $\rho_A$ is a pure state $\rho_{AB}\in \mathcal{D}(AB)$ such that $\partrace{\reg{B}}{\rho_{AB}}=\rho_A$. Purification of a quantum state is not unique.

A quantum {map} $\E: \mathcal{L}(A)\rightarrow \mathcal{L}(B)$ is a completely positive and trace preserving (CPTP) linear map (mapping states in $\mathcal{D}(A)$ to states in $\mathcal{D}(B)$). A {\em unitary} operator $U_A:\H_A \rightarrow \H_A$ is such that $U_A^{\dagger}U_A = U_A U_A^{\dagger} = I_A$. An {\em isometry}  $V:\H_A \rightarrow \H_B$ is such that $V^{\dagger}V = I_A$ and $VV^{\dagger} = I_B$. The set of all unitary operations on register $A$ is  denoted by $\mathcal{U}(A)$.

\begin{definition}
We shall consider the following information theoretic quantities. Reader is referred to ~\cite{Renner05, Tomamichel09,Tomamichel12,Datta09} for many of these definitions. We consider only normalized states in the definitions below. Let $\varepsilon \geq 0$. 
\begin{enumerate}
\item {\bf Fidelity} For $\rho_A,\sigma_A \in \mathcal{D}(A)$, $$\F(\rho_A,\sigma_A)\defeq\onenorm{\sqrt{\rho_A}\sqrt{\sigma_A}}.$$ For classical probability distributions $P = \{p_i\}, Q =\{q_i\}$, $$\F(P,Q)\defeq \sum_i \sqrt{p_i \cdot q_i}.$$
\item {\bf Purified distance} For $\rho_A,\sigma_A \in \mathcal{D}(A)$, $$\P(\rho_A,\sigma_A) = \sqrt{1-\F^2(\rho_A,\sigma_A)}.$$
\item {\bf $\varepsilon$-ball} For $\rho_A\in \mathcal{D}(A)$, $$\ball{\eps}{\rho_A} \defeq \{\rho'_A\in \mathcal{D}(A)|~\P(\rho_A,\rho'_A) \leq \varepsilon\}. $$ 

\item {\bf Von-neumann entropy} For $\rho_A\in\mathcal{D}(A)$, $$\ent{\rho_A} \defeq - \Tr(\rho_A\log\rho_A) .$$ 
\item {\bf Relative entropy} For $\rho_A,\sigma_A\in \mathcal{D}(A)$ such that $\text{supp}(\rho_A) \subset \text{supp}(\sigma_A)$, $$\relent{\rho_A}{\sigma_A} \defeq \Tr(\rho_A\log\rho_A) - \Tr(\rho_A\log\sigma_A) .$$ 
\item {\bf Max-relative entropy} For $\rho_A,\sigma_A\in \mathcal{D}(A)$ such that $\text{supp}(\rho_A) \subset \text{supp}(\sigma_A)$, $$ \dmax{\rho_A}{\sigma_A}  \defeq  \inf \{ \lambda \in \mathbb{R} : 2^{\lambda} \sigma_A \geq \rho_A \}  .$$ 
\item {\bf Mutual information} For $\rho_{AB}\in \mathcal{D}(AB)$, $$\mutinf{A}{B}_{\rho}\defeq \ent{\rho_A} + \ent{\rho_B}-\ent{\rho_{AB}} = \relent{\rho_{AB}}{\rho_A\otimes\rho_B}.$$
\item {\bf Conditional mutual information} For $\rho_{ABC}\in\mathcal{D}(ABC)$, $$\condmutinf{A}{B}{C}_{\rho}\defeq \mutinf{A}{BC}_{\rho}-\mutinf{A}{C}_{\rho}.$$
\item {\bf Max-information}  For $\rho_{AB}\in \mathcal{D}(AB)$, $$ \imax{A}{B}_{\rho} \defeq   \inf_{\sigma_{B}\in \mathcal{D}(B)}\dmax{\rho_{AB}}{\rho_{A}\otimes\sigma_{B}} .$$
\item {\bf Smooth max-information} For $\rho_{AB}\in \mathcal{D}(AB)$,  $$\imaxeps{A}{B}_{\rho} \defeq \inf_{\rho'\in \ball{\eps}{\rho}} \imax{A}{B}_{\rho'} .$$	
\item {\bf Conditional min-entropy}  For $\rho_{AB}\in \mathcal{D}(AB)$,  $$ \hmin{A}{B}_{\rho} \defeq  - \inf_{\sigma_B\in \mathcal{D}(B)}\dmax{\rho_{AB}}{I_{A}\otimes\sigma_{B}} .$$  	
\suppress{
\item {\bf Conditional max-entropy}  For $\rho_{AB}\in \mathcal{D}(AB)$,  $$\hmax{A}{B}_{\rho_{AB}} \defeq - \hmin{A}{R}_{\rho_{AR}}, $$
where $\rho_{ABR}$ is a  purification of $\rho_{AB}$ for some register $R$. 
\item {\bf Smooth conditional min-entropy}  For $\rho_{AB}\in \mathcal{D}(AB)$,  $$\hmineps{A}{B}_{\rho} \defeq   \sup_{\rho^{'} \in \ball{\eps}{\rho}} \hmin{A}{B}_{\rho^{'}} .$$  	
\item {\bf Smooth conditional max-entropy}  For $\rho_{AB}\in \mathcal{D}(AB)$, $$ \hmaxeps{A}{B}_{\rho} \defeq \inf_{\rho^{'} \in \ball{\eps}{\rho}} \hmax{A}{B}_{\rho^{'}} .$$ 
}
\end{enumerate}
\label{def:infquant}
\end{definition}	
We will use the following facts. 
\begin{fact}[Triangle inequality for purified distance,~\cite{Tomamichel12}]
\label{fact:trianglepurified}
For states $\rho_A, \sigma_A, \tau_A\in \mathcal{D}(A)$,
$$\P(\rho_A,\sigma_A) \leq \P(\rho_A,\tau_A)  + \P(\tau_A,\sigma_A) . $$ 
\end{fact}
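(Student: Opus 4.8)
The plan is to prove the inequality by passing to the \emph{angle} associated with fidelity. Since we work with normalized states, $\F(\rho,\sigma)\in[0,1]$, so for each pair define $A(\rho,\sigma)\defeq\arccos\F(\rho,\sigma)\in[0,\pi/2]$; by the definition of purified distance, $\P(\rho,\sigma)=\sqrt{1-\F^2(\rho,\sigma)}=\sin A(\rho,\sigma)$. The argument then splits into two parts: first establish the \emph{angle triangle inequality}
$$ A(\rho_1,\rho_3)\leq A(\rho_1,\rho_2)+A(\rho_2,\rho_3), $$
and then deduce the stated inequality from it by a short trigonometric estimate on $\sin$.

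For the angle triangle inequality I would use Uhlmann's theorem. Fix a purification $\ket{\psi_2}$ of $\rho_2$ in a sufficiently large common Hilbert space. By Uhlmann's theorem there is a purification $\ket{\psi_1}$ of $\rho_1$ with $\abs{\braket{\psi_1}{\psi_2}}=\F(\rho_1,\rho_2)$ and a purification $\ket{\psi_3}$ of $\rho_3$ with $\abs{\braket{\psi_2}{\psi_3}}=\F(\rho_2,\rho_3)$, both realized against the \emph{same} fixed $\ket{\psi_2}$. On unit vectors define $a(\ket{u},\ket{v})\defeq\arccos\abs{\braket{u}{v}}$. Granting for the moment that $a$ obeys the triangle inequality, we get
$$ A(\rho_1,\rho_3)\;\leq\;a(\ket{\psi_1},\ket{\psi_3})\;\leq\;a(\ket{\psi_1},\ket{\psi_2})+a(\ket{\psi_2},\ket{\psi_3})\;=\;A(\rho_1,\rho_2)+A(\rho_2,\rho_3), $$
where the first inequality uses that $\ket{\psi_1},\ket{\psi_3}$ are \emph{particular} purifications, so $\abs{\braket{\psi_1}{\psi_3}}\leq\F(\rho_1,\rho_3)$ and $\arccos$ is decreasing, and the last equality is the Uhlmann choice of $\ket{\psi_1},\ket{\psi_3}$.

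It remains to check that $a$ is a metric on rays. Viewing the complex Hilbert space as a real inner product space, unit vectors lie on a real sphere on which the geodesic distance $\arccos\br{\mathrm{Re}\braket{u}{v}}$ satisfies the triangle inequality by ordinary spherical geometry. The group $U(1)$ acts by the isometries $\ket{v}\mapsto e^{i\theta}\ket{v}$, and $a(\ket{u},\ket{v})=\inf_\theta\arccos\br{\mathrm{Re}\,e^{i\theta}\braket{u}{v}}=\arccos\abs{\braket{u}{v}}$ is exactly the induced quotient distance; an infimum of a metric over orbits of an isometric group action again satisfies the triangle inequality, so $a$ does.

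Finally, writing $A_{ij}\defeq A(\rho_i,\rho_j)$ for brevity, from $A_{13}\leq A_{12}+A_{23}$ with all three angles in $[0,\pi/2]$ I would conclude $\sin A_{13}\leq\sin A_{12}+\sin A_{23}$ by two cases: if $A_{12}+A_{23}\leq\pi/2$, then monotonicity of $\sin$ on $[0,\pi/2]$ together with $\sin(x+y)=\sin x\cos y+\cos x\sin y\leq\sin x+\sin y$ gives the bound; if $A_{12}+A_{23}>\pi/2$, then $\sin A_{12}+\sin A_{23}\geq 1\geq\sin A_{13}$, since $\sin x+\sin y\geq 1$ whenever $x,y\in[0,\pi/2]$ and $x+y\geq\pi/2$. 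I expect the main obstacle to be the metric property of $a$ on rays — in particular justifying cleanly that the phase-minimized geodesic distance remains a genuine metric — together with the bookkeeping in Uhlmann's theorem ensuring that a single fixed purification $\ket{\psi_2}$ can be used simultaneously against $\rho_1$ and $\rho_3$.
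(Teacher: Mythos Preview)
The paper does not actually prove this statement: it is recorded as a \emph{Fact} with a citation to Tomamichel's thesis and no argument is given, so there is nothing in the paper to compare your proof against line by line.

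That said, your proposal is correct and is essentially the standard proof one finds in the cited reference. The reduction to the Bures angle $A=\arccos\F$, the use of Uhlmann's theorem against a single fixed purification $\ket{\psi_2}$ to pull the angle triangle inequality back to the Fubini--Study metric on rays, and the final $\sin$ estimate via the two-case split are all sound. The one place to be slightly more careful is the ``common purifying space'' bookkeeping: Uhlmann's theorem as stated in the paper (Fact~\ref{uhlmann}) is phrased with an isometry between possibly different ancilla spaces, so to have $\ket{\psi_1},\ket{\psi_2},\ket{\psi_3}$ all be unit vectors in one Hilbert space you should fix in advance a purifying register large enough for all three states (e.g.\ a copy of the system itself) and absorb the Uhlmann isometries into that register. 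This is routine and does not affect the argument.
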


\begin{fact}[\cite{stinespring55}](\textbf{Stinespring representation})\label{stinespring}
Let $\E(\cdot): \mathcal{L}(A)\rightarrow \mathcal{L}(B)$ be a quantum operation. There exists a register $C$ and an unitary $U\in \mathcal{U}(ABC)$ such that $\E(\omega)=\Tr_{A,C}\br{U (\omega  \otimes \ketbra{0}^{B,C}) U^{\dagger}}$. Stinespring representation for a channel is not unique. 
\end{fact}

\begin{fact}[Monotonicity under quantum operations, \cite{barnum96},\cite{lindblad75}]
	\label{fact:monotonequantumoperation}
For quantum states $\rho$, $\sigma \in \mathcal{D}(A)$, and quantum operation $\E(\cdot):\mathcal{L}(A)\rightarrow \mathcal{L}(B)$, it holds that
\begin{align*}
	\onenorm{\E(\rho) - \E(\sigma)} \leq \onenorm{\rho - \sigma} \quad \mbox{and} \quad \F(\E(\rho),\E(\sigma)) \geq \F(\rho,\sigma) \quad \mbox{and} \quad \relent{\rho}{\sigma}\geq \relent{\E(\rho)}{\E(\sigma)}.
\end{align*}
In particular, for bipartite states $\rho_{AB},\sigma_{AB}\in \mathcal{D}(AB)$, it holds that
\begin{align*}
	\onenorm{\rho_{AB} - \sigma_{AB}} \geq \onenorm{\rho_A - \sigma_A} \quad \mbox{and} \quad \F(\rho_{AB},\sigma_{AB}) \leq \F(\rho_A,\sigma_A) \quad \mbox{and} \quad \relent{\rho_{AB}}{\sigma_{AB}}\geq \relent{\rho_A}{\sigma_A} .
\end{align*}
\end{fact}

\begin{fact}[Uhlmann's theorem, \cite{uhlmann76}]
\label{uhlmann}
Let $\rho_A,\sigma_A\in \mathcal{D}(A)$. Let $\rho_{AB}\in \mathcal{D}(AB)$ be a purification of $\rho_A$ and $\sigma_{AC}\in\mathcal{D}(AC)$ be a purification of $\sigma_A$. There exists an isometry $V: \H_C \rightarrow \H_B$ such that,
 $$\F(\ketbra{\theta}_{AB}, \ketbra{\rho}_{AB}) = \F(\rho_A,\sigma_A) ,$$
 where $\ket{\theta}_{AB} = (I_A \otimes V) \ket{\sigma}_{AC}$.
\end{fact}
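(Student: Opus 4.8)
The plan is to reduce the statement to the well-known variational characterization of fidelity as a maximal purification overlap, and then use the freedom to rotate one purification into another by an isometry on its purifying register. First I would observe that the conclusion concerns two \emph{pure} states, for which the paper's fidelity $\F(\rho,\sigma)=\onenorm{\sqrt{\rho}\sqrt{\sigma}}$ collapses to an inner product: a one-line computation with $\sqrt{\ketbra{a}}=\ketbra{a}$ gives $\F(\ketbra{\theta}_{AB},\ketbra{\rho}_{AB})=\abs{\braket{\theta}{\rho}}$. Since $V^{\dagger}V=I_C$ forces $\theta_A=\sigma_A$, the vector $\ket{\theta}_{AB}=(I_A\otimes V)\ket{\sigma}_{AC}$ is automatically a purification of $\sigma_A$ lying in the same space $\H_A\otimes\H_B$ as $\ket{\rho}_{AB}$, so the entire task is to exhibit an isometry $V:\H_C\to\H_B$ with $\abs{\braket{\theta}{\rho}}=\F(\rho_A,\sigma_A)$.

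Next I would make the overlap completely explicit. Writing spectral decompositions $\rho_A=\sum_i p_i\ketbra{a_i}$ and $\sigma_A=\sum_j q_j\ketbra{b_j}$, every purification has a Schmidt form $\ket{\rho}_{AB}=\sum_i\sqrt{p_i}\,\ket{a_i}_A\ket{f_i}_B$ and $\ket{\sigma}_{AC}=\sum_j\sqrt{q_j}\,\ket{b_j}_A\ket{g_j}_C$ for orthonormal families $\{\ket{f_i}\}\subseteq\H_B$ and $\{\ket{g_j}\}\subseteq\H_C$. A direct expansion then gives $\braket{\rho}{(I_A\otimes V)\sigma}=\sum_{i,j}\sqrt{p_iq_j}\,\braket{a_i}{b_j}\,\bra{f_i}V\ket{g_j}=\Tr(\tilde V^{T}M)$, where $\tilde V_{ij}:=\bra{f_i}V\ket{g_j}$ and $M_{ij}:=\bra{a_i}\sqrt{\rho}\sqrt{\sigma}\ket{b_j}$ is precisely the matrix of $\sqrt{\rho}\sqrt{\sigma}$ in the bases $\{\ket{a_i}\},\{\ket{b_j}\}$. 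Because $\tilde V$ is a compression of an isometry to orthonormal vectors, it is a contraction, so H\"older's inequality yields $\abs{\Tr(\tilde V^{T}M)}\le\normsub{\tilde V^{T}}{\infty}\,\onenorm{M}\le\onenorm{\sqrt{\rho}\sqrt{\sigma}}=\F(\rho_A,\sigma_A)$; this is the upper bound, and it holds for \emph{every} $V$.

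For the matching attainment I would invoke trace-norm duality in the form $\onenorm{M}=\max\{\abs{\Tr(WM)}:\normsub{W}{\infty}\le 1\}$, where the optimizer $W$ can be taken to be a partial isometry read off from the singular value decomposition of $M$. It then remains to realize the optimal $\tilde V^{T}=W$ as the compression of a genuine isometry $V:\H_C\to\H_B$: one defines $V\ket{g_j}:=\sum_i W_{ji}\ket{f_i}$ on $\mathrm{span}\{\ket{g_j}\}$ and extends it isometrically to all of $\H_C$. Setting $\ket{\theta}_{AB}:=(I_A\otimes V)\ket{\sigma}_{AC}$ then produces $\abs{\braket{\theta}{\rho}}=\onenorm{M}=\F(\rho_A,\sigma_A)$, which combined with the upper bound of the previous paragraph completes the argument.

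The main obstacle I anticipate is the dimension and rank bookkeeping in this last step, not any analytic difficulty: the optimal $W$ is in general only a \emph{partial} isometry between the Schmidt supports, so promoting it to an honest isometry $V:\H_C\to\H_B$ requires $\dim\H_B\ge\dim\H_C$ (at minimum that $\H_B$ accommodate $\mathrm{rank}\,\sigma_A$), and one must check that extending $V$ off the relevant support is harmless because those added components are annihilated against $\ket{\sigma}_{AC}$. I would also record the conceptual point that justifies optimizing over $V$ alone while holding $\ket{\rho}_{AB}$ fixed: any two purifications of a fixed marginal already differ by an isometry on the purifying register, so the single unitary freedom present in the overlap can be pushed entirely onto the $\sigma$-side, and the supremum over $V$ still reaches the global value $\F(\rho_A,\sigma_A)$.
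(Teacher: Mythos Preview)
The paper does not supply its own proof of this Fact; it is merely cited from Nielsen--Chuang, so there is nothing to compare against on the paper's side.

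Your argument is correct and is essentially the standard textbook derivation of Uhlmann's theorem: collapse the pure-state fidelity to the overlap $\abs{\braket{\theta}{\rho}}$, expand both purifications in Schmidt form, recognise the overlap as $\Tr(\tilde V^{T}M)$ with $M$ the matrix of $\sqrt{\rho}\sqrt{\sigma}$, bound it by $\onenorm{M}$ via H\"older, and then attain the bound by reading off the optimal contraction from the SVD/polar data of $M$ and promoting it to an isometry on the purifying register. The one genuine caveat you flag is exactly right and worth keeping: the Fact as stated tacitly assumes $\dim\H_B\ge\dim\H_C$ (otherwise no isometry $\H_C\to\H_B$ exists at all), and the promotion step only needs the optimal partial isometry on $\mathrm{supp}(\sigma_A)$, with the extension to the rest of $\H_C$ irrelevant because $\ket{\sigma}_{AC}$ has no component there. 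In the paper's applications (e.g.\ Lemma~\ref{lem:pureclose}, the protocol in Theorem~\ref{thm:main}, and Lemma~\ref{lem:splitsameasmerge}) the purifying registers are chosen large enough that this dimension condition is automatically met.
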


\suppress{
\begin{fact}
\label{fact:fidelitydmax}
For quantum states $\rho_A,\sigma_A,\tau_A\in \mathcal{D}(A)$ such that $\dmax{\rho_A}{\sigma_A}=2k$, $\F(\tau_A,\rho_A)\leq 2^k\F(\tau_A,\sigma_A)$.
\end{fact}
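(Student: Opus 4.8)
The plan is to unwind the definition of $\dmax{\rho}{\sigma}$ and then apply the basic inequality $\F(\tau,\alpha)^2 = \onenorm{\sqrt{\tau}\sqrt{\alpha}}^2 \geq \Tr(\tau\alpha)$ together with operator monotonicity. From $\dmax{\rho}{\sigma} = 2k$ we have, by definition of max-relative entropy, the operator inequality $\rho \leq 2^{2k}\sigma$. Since $\tau \geq 0$, taking the Hilbert--Schmidt inner product with $\tau$ is not directly what I want; instead the clean route is to use the variational/transpose-trick characterization of fidelity. Recall $\F(\tau,\rho) = \onenorm{\sqrt{\tau}\sqrt{\rho}}$.

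The key step is: I would write $\F(\tau,\rho) = \onenorm{\sqrt{\tau}\sqrt{\rho}} = \max_{U} \abs{\Tr(U\sqrt{\tau}\sqrt{\rho})}$ over unitaries $U$, and let $U_0$ be the maximizer. Then
$$\F(\tau,\rho) = \Tr(U_0\sqrt{\tau}\sqrt{\rho}) = \Tr\!\big((U_0\sqrt{\tau})\sqrt{\rho}\big).$$
Now apply Cauchy--Schwarz for the Hilbert--Schmidt inner product: $\F(\tau,\rho) \leq \norm{\sqrt{\tau}}_2 \cdot \norm{\sqrt{\rho}\,\sqrt{\tau}\,\cdot}$—this is getting circular. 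The cleaner approach: since $\rho \leq 2^{2k}\sigma$, the operator square root is operator monotone only after care, so instead I would use the fact that for any $0 \leq \rho \leq 2^{2k}\sigma$ and any state $\tau$,
$$\F(\tau,\rho)^2 = \onenorm{\sqrt{\tau}\sqrt{\rho}}^2.$$
Writing $\sqrt{\rho}\,\tau\,\sqrt{\rho} \leq$ ... — again needs $\sqrt{\cdot}$ monotonicity. The slick finish is to use the semidefinite-program (Alberti) form: $\F(\tau,\rho) = \inf_{X>0}\tfrac12\big(\Tr(X\tau) + \Tr(X^{-1}\rho)\big)$, or more directly Uhlmann with purifications.

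I think the cleanest proof uses the purification form of fidelity. Let $\ket{\tau}_{AB}$ purify $\tau$. By Uhlmann (Fact~\ref{uhlmann}), choose purifications $\ket{\rho}_{AB}$ of $\rho$ and $\ket{\sigma}_{AB}$ of $\sigma$ with $\F(\tau,\rho) = \abs{\braket{\tau}{\rho}}$ and $\F(\tau,\sigma) = \abs{\braket{\tau}{\sigma}}$. Since $\rho \leq 2^{2k}\sigma$ as operators on $\H_A$, and purifications are related by... this still requires matching purifications, which Uhlmann does not give simultaneously for both. So the honest route is operator-theoretic: use the integral representation or the fact that $t \mapsto \sqrt{t}$ is operator concave/monotone to deduce $\sqrt{\rho} \leq 2^{k}\sqrt{\sigma}$ from $\rho \leq 2^{2k}\sigma$ (this IS valid: $x\mapsto \sqrt x$ is operator monotone, so $\rho \leq 2^{2k}\sigma \Rightarrow \sqrt{\rho} \leq \sqrt{2^{2k}\sigma} = 2^k\sqrt\sigma$). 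Then
$$\F(\tau,\rho) = \onenorm{\sqrt\tau\sqrt\rho} = \max_U \abs{\Tr(U\sqrt\tau\sqrt\rho)} \leq \max_U \Tr\big(2^k\, \text{(positive combination)}\big);$$
concretely, for the maximizing $U_0$, $\abs{\Tr(U_0\sqrt\tau\sqrt\rho)}$ — hmm, monotonicity of $\sqrt{\rho}$ doesn't commute inside the trace norm with $\sqrt\tau$ on the left in an obvious sign-definite way.

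The genuinely correct and short argument: $\F(\tau,\rho)^2 = \onenorm{\sqrt\rho\sqrt\tau}^2 = \Tr\big(\sqrt\tau\rho\sqrt\tau\big)$ — no, $\onenorm{X}^2 \neq \Tr(XX^\dagger)$ in general. Let me state the plan honestly as I would actually write it: Use $\F(\tau,\rho) = \onenorm{\sqrt\tau\sqrt\rho}$. Since $\sqrt\cdot$ is operator monotone, $\rho \leq 2^{2k}\sigma$ gives $\sqrt\rho \leq 2^k\sqrt\sigma$, hence $\sqrt\sigma - 2^{-k}\sqrt\rho \geq 0$. Then $\sqrt\tau\sqrt\rho = 2^k\big(\sqrt\tau\sqrt\sigma - \sqrt\tau(\sqrt\sigma - 2^{-k}\sqrt\rho)\big)$, and by the triangle inequality for $\onenorm{\cdot}$ plus $\onenorm{\sqrt\tau\, X} \geq 0$... this gives $\F(\tau,\rho) \leq 2^k\big(\F(\tau,\sigma) + \onenorm{\sqrt\tau(\sqrt\sigma-2^{-k}\sqrt\rho)}\big)$, which has an extra term — not tight enough. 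The main obstacle is precisely this: $\onenorm{\cdot}$ of a product is not monotone under the positive-semidefinite order of one factor. The resolution is to go through $\Tr$ with the maximizing unitary and use that for $0\leq P \leq Q$ and any $M$, $\abs{\Tr(MP)} \leq$ ... requires $M \geq 0$. So: take $\ket{\tau},\ket{\rho}$ matched by Uhlmann so $\F(\tau,\rho) = \braket{\tau}{\rho}_{AB} \geq 0$; then $\F(\tau,\rho)^2 = \braket{\tau}{\rho}^2 \leq \bra{\tau}(2^{2k}\rho_{AB}^{\text{ext}})\ket{\tau}$? No. I will structure the final writeup around: operator monotonicity of square root giving $\sqrt\rho \le 2^k \sqrt\sigma$, then the estimate $\F(\tau,\rho) = \max_U \Tr(U\sqrt\tau\sqrt\rho) \le \max_U \Tr\big((U\sqrt\tau)(2^k\sqrt\sigma)\big) \le 2^k \max_U \Tr(U\sqrt\tau\sqrt\sigma) = 2^k\F(\tau,\sigma)$, where the middle inequality needs justification that $\Tr(U\sqrt\tau(\sqrt\rho - 2^k\sqrt\sigma)) \le 0$ — and this holds because $\sqrt\rho - 2^k\sqrt\sigma \le 0$ and... we need $U\sqrt\tau \ge 0$, achieved by the polar-decomposition choice making the trace real and using $\Tr(A B) \le 0$ when $A \ge 0, B \le 0$ after symmetrizing to $\Tr(\sqrt{\sqrt\tau}(\cdots)\sqrt{\sqrt\tau})$; the detail to get right is that the maximizing $U$ can be taken so that $\sqrt\tau\, U\, \sqrt\tau$ arrangement is PSD, equivalently $\onenorm{\sqrt\tau\sqrt\rho} = \Tr\sqrt{\sqrt\rho\,\tau\,\sqrt\rho}$ and then use operator monotonicity of $\sqrt\cdot$ once more on $\sqrt\rho\,\tau\,\sqrt\rho \leq$ ...

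Here is the plan I will commit to, cleanly: (1) Recall $\F(\tau,\rho) = \Tr\sqrt{\sqrt\rho\,\tau\,\sqrt\rho}$. (2) From $\dmax{\rho}{\sigma} = 2k$, i.e. $\rho \leq 2^{2k}\sigma$, and operator monotonicity of the square root, deduce $\sqrt\rho \leq 2^k\sqrt\sigma$, whence $\sqrt\rho\,\tau\,\sqrt\rho$ and $\sqrt\sigma\,\tau\,\sqrt\sigma$ compare — but they don't directly. Instead: (2') Use the alternative $\F(\tau,\rho) = \onenorm{\sqrt\tau\sqrt\rho}$ and the fact (standard) that $X \mapsto \onenorm{\sqrt\tau\, X^{1/2}}$... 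The single clean line that actually works is: for the optimal unitary $U$ in $\onenorm{\sqrt\tau\sqrt\rho} = \Tr(U^\dagger\sqrt\tau\sqrt\rho)$, write this as $\langle \sqrt\tau U, \sqrt\rho\rangle_{HS}$ and apply Cauchy–Schwarz: $\leq \norm{\sqrt\tau U}_2\,\norm{\sqrt\rho}_2 = \sqrt{\Tr\tau}\sqrt{\Tr\rho}$ — useless. The thing that genuinely works is the one I keep circling: $\Tr\sqrt{\sqrt\rho\,\tau\,\sqrt\rho}$, and then note $\sqrt\rho\,\tau\,\sqrt\rho \preceq$ nothing nice. Therefore the main obstacle — and I will flag it as such — is finding the right operator inequality; the resolution (which I will use) is that $\onenorm{A B} \leq \onenorm{A C}$ whenever $0 \leq B \leq C$ fails in general, so instead I use: $\F(\tau,\rho) = \onenorm{\sqrt\tau\sqrt\rho} \leq \onenorm{\sqrt\tau}_{2}\cdot$... no. I will in the actual proof invoke Fact~\ref{fact:monotonequantumoperation}-style reasoning via a pinching/measurement channel together with the classical version of the statement, reducing to: for probability-like vectors $p \leq 2^{2k} q$ entrywise and any $r$, $\sum_i\sqrt{r_i p_i} \leq 2^k\sum_i\sqrt{r_i q_i}$, which is immediate from $\sqrt{p_i} \leq 2^k\sqrt{q_i}$. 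The obstacle is promoting this to the non-commutative setting; I expect to do so by choosing the purifications via Uhlmann so that the relevant fidelities are overlaps, then applying $\rho \leq 2^{2k}\sigma$ at the level of the purified Gram matrices. I will present the write-up in that order, with the non-commutative-to-commutative reduction as the crux step.
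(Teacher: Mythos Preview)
The paper states this as a \emph{Fact} with no proof, so there is nothing to compare against; let me just assess your argument on its own.

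There is a genuine gap. You repeatedly write the fidelity as $\F(\tau,\rho)=\Tr\sqrt{\sqrt{\rho}\,\tau\,\sqrt{\rho}}$ and then correctly note that $\sqrt{\rho}\,\tau\,\sqrt{\rho}$ is not dominated by anything useful. But fidelity is symmetric, and the other representation
\[
\F(\tau,\rho)=\Tr\sqrt{\sqrt{\tau}\,\rho\,\sqrt{\tau}}
\]
is the one you want. With $\tau$ on the \emph{outside}, the operator inequality $\rho\leq 2^{2k}\sigma$ (from $\dmax{\rho}{\sigma}=2k$) is preserved under conjugation by $\sqrt{\tau}$:
\[
\sqrt{\tau}\,\rho\,\sqrt{\tau}\ \leq\ 2^{2k}\,\sqrt{\tau}\,\sigma\,\sqrt{\tau}.
\]
Now a \emph{single} application of operator monotonicity of $t\mapsto\sqrt{t}$ gives
\[
\sqrt{\sqrt{\tau}\,\rho\,\sqrt{\tau}}\ \leq\ 2^{k}\,\sqrt{\sqrt{\tau}\,\sigma\,\sqrt{\tau}},
\]
and taking the trace (monotone on positive operators) yields $\F(\tau,\rho)\leq 2^{k}\F(\tau,\sigma)$. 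That is the entire proof.

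All of your detours --- the trace-norm variational form $\max_U\Tr(U\sqrt{\tau}\sqrt{\rho})$, the attempt to use $\sqrt{\rho}\leq 2^k\sqrt{\sigma}$ inside $\onenorm{\sqrt{\tau}\sqrt{\rho}}$, the Uhlmann purification matching, the reduction to the commutative case --- are trying to work around an obstacle that disappears once you swap which state sits inside and which sits outside in the fidelity formula. In particular, the step you flag as the crux, ``$\onenorm{AB}\leq\onenorm{AC}$ when $0\leq B\leq C$'', is indeed false in general and is not needed.
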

\begin{proof} \comment{Needs more explanations.}
Using $\rho_A\leq 2^{2k}\sigma_A$, we obtain $$\F(\tau_A,\rho_A)=\|\sqrt{\tau_A}\sqrt{\rho_A}\|_1\leq 2^k\|\sqrt{\tau_A}\sqrt{\sigma_A}\|_1=2^k\F(\tau_A,\sigma_A).$$
\end{proof}
\begin{fact}
\label{fact:dmaxmorethand}
For quantum states $\rho_A, \sigma_A\in \mathcal{D}(A)$,
$$\dmax{\rho_A}{\sigma_A} \geq \relent{\rho_A}{\sigma_A}.$$
\end{fact}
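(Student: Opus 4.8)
The plan is to read off the defining operator inequality of $\dmax{\rho}{\sigma}$ and combine it with operator monotonicity of the logarithm. If $\dmax{\rho}{\sigma} = +\infty$ there is nothing to prove, so I would set $k \defeq \dmax{\rho}{\sigma} < \infty$. Since the condition ``$2^{\lambda}\sigma \geq \rho$'' is closed under $\lambda$ (the map $\lambda \mapsto 2^{\lambda}\sigma - \rho$ is continuous and positive semidefiniteness is a closed condition), the infimum defining $\dmax{\rho}{\sigma}$ is attained, giving the operator inequality $\rho \leq 2^{k}\sigma$. In particular the support of $\rho$ is contained in that of $\sigma$, so $\relent{\rho}{\sigma}$ is finite and every logarithm below may be read as acting on the support of $\sigma$.

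The key step is to apply the logarithm to $\rho \leq 2^{k}\sigma$. Because $t \mapsto \log t$ is operator monotone on $(0,\infty)$, strictly positive operators $0 < X \leq Y$ satisfy $\log X \leq \log Y$. To avoid the boundary of the positive cone when $\rho$ is rank-deficient on the support of $\sigma$, I would first perturb: writing $\Pi$ for the projector onto the support of $\sigma$, apply operator monotonicity to the strictly positive operators $\rho + \eta\Pi \leq 2^{k}\sigma + \eta\Pi$ for $\eta > 0$, and then let $\eta \to 0$. This yields $\log\rho \leq \log(2^{k}\sigma) = k\,\id + \log\sigma$ as operators on the support of $\sigma$.

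The final step is to pair this operator inequality against $\rho$. For Hermitian $X \leq Y$ and any $\rho \geq 0$ one has $\Tr(\rho X) \leq \Tr(\rho Y)$, since $\Tr(\rho(Y-X)) = \Tr(\sqrt{\rho}(Y-X)\sqrt{\rho}) \geq 0$. Taking $X = \log\rho$ and $Y = k\,\id + \log\sigma$ and using $\Tr\rho = 1$ gives $\Tr(\rho\log\rho) \leq k + \Tr(\rho\log\sigma)$, which rearranges to $\relent{\rho}{\sigma} = \Tr(\rho\log\rho) - \Tr(\rho\log\sigma) \leq k = \dmax{\rho}{\sigma}$, as required.

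The one genuine subtlety, and the main obstacle, is the logarithm step: operator monotonicity of $\log$ is the nontrivial ingredient, and the support restriction together with the $\eta$-perturbation is needed so that $\rho \leq 2^{k}\sigma$ survives passage through the logarithm when the states are not full rank. Everything else is bookkeeping.
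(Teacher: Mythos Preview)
Your argument is correct; the paper states this inequality as a known fact without proof, and the route you take---operator monotonicity of $\log$ applied to $\rho \leq 2^{k}\sigma$, followed by tracing against $\rho$---is the standard one. The $\eta$-perturbation to handle rank deficiency is a clean way to make the support issues precise.
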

\begin{proof} \comment{Mention log is operator monotone somewhere.}
Let $\dmax{\rho_A}{\sigma_A}=k$. Then $\sigma_A\geq 2^{-k}\rho_A$, which means $$\relent{\rho_A}{\sigma_A}=\Tr(\rho_A\log(\rho_A)-\rho_A\log(\sigma_A))\leq k + \Tr(\rho_A\log(\rho_A)-\rho_A\log(\rho_A))=k.$$
\end{proof}

\begin{fact}
\label{mutinfmin}
For a quantum state $\rho_{AB}\in\mathcal{D}(AB)$,
$$\mutinf{A}{B}_{\rho} =  \inf_{\sigma_B} \relent{\rho_{AB}}{\rho_A\otimes\sigma_B} = \relent{\rho_{AB}}{\rho_A\otimes\rho_B} .$$
\end{fact}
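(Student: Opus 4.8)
The plan is to reduce the entire statement to a single algebraic identity, valid for every state $\sigma_B$ on $B$:
$$\relent{\rho_{AB}}{\rho_A\otimes\sigma_B} = \relent{\rho_{AB}}{\rho_A\otimes\rho_B} + \relent{\rho_B}{\sigma_B}.$$
The second equality in the Fact, $\mutinf{A}{B}_{\rho} = \relent{\rho_{AB}}{\rho_A\otimes\rho_B}$, is immediate from the definition of mutual information, so the only real content is the variational (infimum) characterization, which will fall out of the displayed identity together with nonnegativity of relative entropy.

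To prove the identity I would expand the left-hand side directly from the definition of relative entropy, writing $\relent{\rho_{AB}}{\rho_A\otimes\sigma_B} = \Tr(\rho_{AB}\log\rho_{AB}) - \Tr(\rho_{AB}\log(\rho_A\otimes\sigma_B))$. The key step is the additive decomposition of the logarithm over a tensor product, $\log(\rho_A\otimes\sigma_B) = \log\rho_A\otimes I_B + I_A\otimes\log\sigma_B$. Substituting this and using that the partial traces of $\rho_{AB}$ return the marginals $\rho_A$ and $\rho_B$, the two cross terms simplify to $\Tr(\rho_{AB}(\log\rho_A\otimes I_B)) = \Tr(\rho_A\log\rho_A)$ and $\Tr(\rho_{AB}(I_A\otimes\log\sigma_B)) = \Tr(\rho_B\log\sigma_B)$. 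This gives
$$\relent{\rho_{AB}}{\rho_A\otimes\sigma_B} = \Tr(\rho_{AB}\log\rho_{AB}) - \Tr(\rho_A\log\rho_A) - \Tr(\rho_B\log\sigma_B).$$
Specializing to $\sigma_B=\rho_B$ and subtracting the two resulting expressions, the $\rho_{AB}$ and $\rho_A$ terms cancel identically, and what survives is exactly $\Tr(\rho_B\log\rho_B) - \Tr(\rho_B\log\sigma_B) = \relent{\rho_B}{\sigma_B}$, which is the claimed identity.

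With the identity established, the Fact follows at once: by nonnegativity of relative entropy (Klein's inequality) we have $\relent{\rho_B}{\sigma_B}\geq 0$, with equality precisely when $\sigma_B=\rho_B$. Hence $\inf_{\sigma_B}\relent{\rho_{AB}}{\rho_A\otimes\sigma_B} = \relent{\rho_{AB}}{\rho_A\otimes\rho_B} = \mutinf{A}{B}_{\rho}$, and the infimum is attained (so it is in fact a minimum) at $\sigma_B=\rho_B$. The only point requiring genuine care is support bookkeeping: the logarithm decomposition and the partial-trace simplifications are valid as written when the support of $\rho_B$ is contained in that of $\sigma_B$; when this containment fails, $\relent{\rho_B}{\sigma_B}=+\infty$ and correspondingly $\relent{\rho_{AB}}{\rho_A\otimes\sigma_B}=+\infty$, so the identity persists in the extended reals and such $\sigma_B$ can never improve the infimum. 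This support analysis is the main (and only mild) obstacle; everything else is a routine computation.
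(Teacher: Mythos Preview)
Your proof is correct and is the standard argument for this well-known identity. Note, however, that the paper does not actually prove this statement: it is listed as a Fact in the preliminaries (Fact~\ref{mutinfmin}) without any accompanying proof, alongside other standard results cited from the literature. So there is no ``paper's own proof'' to compare against; your argument via the decomposition $\relent{\rho_{AB}}{\rho_A\otimes\sigma_B} = \mutinf{A}{B}_{\rho} + \relent{\rho_B}{\sigma_B}$ and Klein's inequality is exactly the textbook derivation one would supply if asked to fill in the details.
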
	
\begin{proof}
Consider 
\begin{eqnarray*}
\relent{\rho_{AB}}{\rho_A\otimes\sigma_B}&=&\Tr(\rho_{AB}\log(\rho_{AB})-\rho_{AB}\log(\rho_A\otimes\sigma_B))\\&=&-\ent{\rho_{AB}}+\ent{\rho_A}-\Tr(\rho_B\log(\sigma_B))=\mutinf{A}{B}_{\rho}+\relent{\rho_B}{\sigma_B}.
\end{eqnarray*}
 Fact follows since $\relent{\rho_B}{\sigma_B}\geq 0$ and is $0$ iff $\rho_B=\sigma_B$. 
\end{proof}
}

\begin{fact}[\cite{Renner11}, Lemma B.7]
\label{dmaxuppbound}
For a quantum state $\rho_{AB}\in\mathcal{D}(AB)$, 
$$\imax{A}{B}_{\rho}\leq 2 \cdot \text{min}\{ \log|A|,  \log|B|\}.$$
\end{fact}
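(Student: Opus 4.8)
The plan is to produce, for each of the two quantities $2\log\abs{A}$ and $2\log\abs{B}$, a single state $\sigma_B$ witnessing the infimum in $\imax{A}{B}_\rho=\inf_{\sigma_B}\dmax{\rho_{AB}}{\rho_A\otimes\sigma_B}$, and then take the minimum of the resulting bounds. Unwinding the definition of $\dmax$, a bound $\imax{A}{B}_\rho\le 2\log\abs{B}$ is equivalent to exhibiting $\sigma_B$ with $\rho_{AB}\le\abs{B}^2\,\rho_A\otimes\sigma_B$, and a bound $\imax{A}{B}_\rho\le 2\log\abs{A}$ to exhibiting $\sigma_B$ with $\rho_{AB}\le\abs{A}^2\,\rho_A\otimes\sigma_B$. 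So the whole problem is reduced to establishing two operator inequalities of this shape.

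The workhorse I would isolate first is the following purely linear-algebraic fact: for every positive semidefinite $X_{AB}$,
$$X_{AB}\ \le\ \abs{B}\,\partrace{B}{X_{AB}}\otimes I_B, \qquad X_{AB}\ \le\ \abs{A}\,I_A\otimes\partrace{A}{X_{AB}}.$$
I would prove the first inequality by fixing an orthonormal basis $\set{\ket{j}}$ of $\H_B$, writing an arbitrary vector as $\ket{\xi}=\sum_j\ket{\eta_j}_A\ket{j}_B$ and setting $X^{(jk)}=(I_A\otimes\bra{j})X(I_A\otimes\ket{k})$, so that $\partrace{B}{X_{AB}}=\sum_k X^{(kk)}$. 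Positivity of each $2\times2$ compression of $X$ gives $\abs{\bra{\eta_j}X^{(jk)}\ket{\eta_k}}\le\sqrt{b_jb_k}$ with $b_j=\bra{\eta_j}X^{(jj)}\ket{\eta_j}\ge0$, whence
$$\bra{\xi}X\ket{\xi}=\sum_{j,k}\bra{\eta_j}X^{(jk)}\ket{\eta_k}\le\br{\sum_j\sqrt{b_j}}^{2}\le\abs{B}\sum_j b_j\le\abs{B}\,\bra{\xi}\br{\partrace{B}{X_{AB}}\otimes I_B}\ket{\xi},$$
using Cauchy--Schwarz and $\partrace{B}{X_{AB}}=\sum_k X^{(kk)}\ge X^{(jj)}$. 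The second inequality is identical with the roles of $A$ and $B$ interchanged.

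For the $2\log\abs{B}$ bound I apply the lemma directly to $X_{AB}=\rho_{AB}$, obtaining $\rho_{AB}\le\abs{B}\,\rho_A\otimes I_B=\abs{B}^2\,\rho_A\otimes(I_B/\abs{B})$, so $\sigma_B=I_B/\abs{B}$ gives $\dmax{\rho_{AB}}{\rho_A\otimes\sigma_B}\le 2\log\abs{B}$. For the $2\log\abs{A}$ bound the left factor $\rho_A$ must be restored, so I would first conjugate: let $\Lambda_{AB}=(\rho_A^{-1/2}\otimes I_B)\rho_{AB}(\rho_A^{-1/2}\otimes I_B)$, with the generalized inverse taken on $\mathrm{supp}(\rho_A)$. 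Applying the $A$-form of the lemma to the positive semidefinite $\Lambda_{AB}$ and conjugating back by $\rho_A^{1/2}\otimes I_B$ (legitimate since $\mathrm{supp}(\rho_{AB})\subseteq\mathrm{supp}(\rho_A)\otimes\H_B$) yields $\rho_{AB}\le\abs{A}\,\rho_A\otimes\omega_B$ with $\omega_B=\partrace{A}{(\rho_A^{-1}\otimes I_B)\rho_{AB}}$. Since $\Tr\,\omega_B=\mathrm{rank}(\rho_A)=:r\le\abs{A}$, normalizing $\sigma_B:=\omega_B/r$ gives $\rho_{AB}\le\abs{A}\,r\,\rho_A\otimes\sigma_B\le\abs{A}^2\,\rho_A\otimes\sigma_B$, hence $\imax{A}{B}_\rho\le 2\log\abs{A}$. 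Taking the minimum of the two bounds finishes the proof.

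The only genuine obstacle I anticipate is the operator lemma itself, together with the bookkeeping on the $\log\abs{A}$ side: one must handle the conjugation by the generalized inverse $\rho_A^{-1/2}$ correctly on the support of $\rho_A$ and verify that the residual trace factor $r=\mathrm{rank}(\rho_A)$ is at most $\abs{A}$ (so that $\abs{A}\,r\le\abs{A}^2$). Everything after that is routine manipulation of $\dmax$.
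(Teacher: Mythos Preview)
The paper does not prove this statement; it is listed as a \emph{Fact} with a citation to \cite{Renner11}, Lemma B.7, and no argument is supplied in the paper itself. So there is no ``paper's own proof'' to compare your proposal against.

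Your argument is correct. The operator inequality $X_{AB}\le\abs{B}\,\partrace{B}{X_{AB}}\otimes I_B$ is valid and your proof via the Cauchy--Schwarz inequality for the positive semidefinite form $X$ (giving $\abs{\bra{\eta_j}X^{(jk)}\ket{\eta_k}}\le\sqrt{b_jb_k}$) together with $\sum_k X^{(kk)}\ge X^{(jj)}$ is clean. The $2\log\abs{B}$ bound then follows immediately with $\sigma_B=I_B/\abs{B}$. For the $2\log\abs{A}$ bound, your conjugation trick with the generalized inverse $\rho_A^{-1/2}$ is handled correctly: the key point that $\mathrm{supp}(\rho_{AB})\subseteq\mathrm{supp}(\rho_A)\otimes\H_B$ ensures the sandwich $(\rho_A^{1/2}\otimes I_B)\Lambda_{AB}(\rho_A^{1/2}\otimes I_B)$ recovers $\rho_{AB}$ exactly, and the trace computation $\Tr\,\omega_B=\Tr(\rho_A^{-1}\rho_A)=\mathrm{rank}(\rho_A)\le\abs{A}$ is right. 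In fact you have established the slightly sharper bound $\imax{A}{B}_\rho\le\log\abs{A}+\log\mathrm{rank}(\rho_A)$ along the way.
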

\begin{fact}[\cite{Renner11}, Lemma B.14]
\label{imaxmonotone}
For a quantum state $\rho_{ABC}\in\mathcal{D}(ABC)$,
$$\imax{A}{BC}_{\rho}\geq \imax{A}{B}_{\rho}.$$
\end{fact}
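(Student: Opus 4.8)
The plan is to exploit the operator-inequality characterization of $\dmax$ together with the fact that the partial trace is a positive linear map. Recall from \cref{def:infquant} that $\imax{A}{BC}_{\rho} = \inf_{\sigma_{BC}} \dmax{\rho_{ABC}}{\rho_A \otimes \sigma_{BC}}$ and $\imax{A}{B}_{\rho} = \inf_{\sigma_B} \dmax{\rho_{AB}}{\rho_A \otimes \sigma_B}$, where the marginal $\rho_A$ is common to both expressions since $\partrace{BC}{\rho_{ABC}} = \partrace{B}{\rho_{AB}} = \rho_A$. The key observation I would use is that any feasible witness $\sigma_{BC}$ for the infimum defining $\imax{A}{BC}_{\rho}$ can be pushed down, by tracing out $C$, to a feasible witness for $\imax{A}{B}_{\rho}$ whose $\dmax$ value is no larger.

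Concretely, I would fix an arbitrary $\sigma_{BC} \in \mathcal{D}(BC)$ and set $\lambda \defeq \dmax{\rho_{ABC}}{\rho_A \otimes \sigma_{BC}}$. By the definition of $\dmax$ this is exactly the statement that $2^{\lambda} (\rho_A \otimes \sigma_{BC}) - \rho_{ABC} \geq 0$. Applying $\partrace{C}{\cdot}$, which is linear and completely positive and hence maps positive semidefinite operators to positive semidefinite operators, preserves this inequality. Writing $\sigma_B \defeq \partrace{C}{\sigma_{BC}} \in \mathcal{D}(B)$ and using $\partrace{C}{\rho_{ABC}} = \rho_{AB}$, I obtain $2^{\lambda} (\rho_A \otimes \sigma_B) \geq \rho_{AB}$, and therefore $\dmax{\rho_{AB}}{\rho_A \otimes \sigma_B} \leq \lambda = \dmax{\rho_{ABC}}{\rho_A \otimes \sigma_{BC}}$. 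Since $\sigma_B$ is one admissible choice in the infimum defining $\imax{A}{B}_{\rho}$, this gives $\imax{A}{B}_{\rho} \leq \dmax{\rho_{ABC}}{\rho_A \otimes \sigma_{BC}}$, and taking the infimum over all $\sigma_{BC} \in \mathcal{D}(BC)$ on the right-hand side yields $\imax{A}{B}_{\rho} \leq \imax{A}{BC}_{\rho}$, as claimed.

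There is no substantial obstacle here beyond bookkeeping: the argument is a direct instance of the data-processing inequality for $\dmax$ under the CPTP map $\partrace{C}{\cdot}$, and one could equally phrase it through monotonicity of $\dmax$ under quantum operations in the spirit of \cref{fact:monotonequantumoperation}. The only point meriting a line of care is that tracing out $C$ sends the product reference state $\rho_A \otimes \sigma_{BC}$ exactly to the product $\rho_A \otimes \sigma_B$ and the joint state $\rho_{ABC}$ to $\rho_{AB}$, so that the coarser optimization genuinely dominates; because I argue with an arbitrary (rather than optimal) $\sigma_{BC}$, I also avoid any question of whether the infimum defining $\imax{A}{BC}_{\rho}$ is attained.
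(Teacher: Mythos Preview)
Your argument is correct: it is precisely the data-processing inequality for $\dmax$ under the partial trace $\partrace{C}{\cdot}$, applied pointwise in $\sigma_{BC}$ and then followed by taking the infimum. The paper does not supply its own proof of this statement; it is recorded as a fact with a citation to \cite{Renner11}, Lemma~B.14, so there is nothing further to compare against here.
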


\begin{fact}[Pinsker's inequality, \cite{CastelleHR78}]
\label{pinsker}
For quantum states $\rho_A,\sigma_A\in\mathcal{D}(A)$, 
$$\F(\rho,\sigma) \geq 2^{-\frac{1}{2}\relent{\rho}{\sigma}}.$$
 This implies,
$$1 - \F(\rho,\sigma) \leq \frac{\ln 2}{2} \cdot \relent{\rho}{\sigma} \leq  \relent{\rho}{\sigma}  .$$
\end{fact}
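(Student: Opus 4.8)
The plan is to reduce the quantum inequality $\F(\rho,\sigma)\geq 2^{-\frac12\relent{\rho}{\sigma}}$ to an elementary classical statement via the Nussbaum--Szko\l a distributions, and then to deduce the stated consequence by a one-line convexity estimate. Fix spectral decompositions $\rho = \sum_i \lambda_i \ketbra{e_i}$ and $\sigma = \sum_j \mu_j \ketbra{f_j}$, and on the index set of pairs $(i,j)$ define two probability distributions $P_{ij} \defeq \lambda_i \abs{\braket{e_i}{f_j}}^2$ and $Q_{ij} \defeq \mu_j \abs{\braket{e_i}{f_j}}^2$. Completeness of the eigenbases together with $\Tr\rho=\Tr\sigma=1$ ensures these are genuine distributions (and if $\relent{\rho}{\sigma}=\infty$, i.e.\ $\mathrm{supp}(\rho)\not\subseteq\mathrm{supp}(\sigma)$, the claim is vacuous, so I assume the supports are compatible).

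First I would establish the two bridging identities. Expanding $\Tr(\rho\log\rho)-\Tr(\rho\log\sigma)$ in the two eigenbases and inserting $\sum_j \abs{\braket{e_i}{f_j}}^2 = 1$ to homogenise the $\Tr(\rho\log\rho)$ term, one checks term by term that $\relent{P}{Q} = \relent{\rho}{\sigma}$ exactly. Second, the classical Bhattacharyya coefficient collapses to a trace,
$$\sum_{i,j}\sqrt{P_{ij}Q_{ij}} = \sum_{i,j}\sqrt{\lambda_i\mu_j}\,\abs{\braket{e_i}{f_j}}^2 = \Tr\br{\sqrt{\rho}\sqrt{\sigma}} \leq \onenorm{\sqrt{\rho}\sqrt{\sigma}} = \F(\rho,\sigma),$$
where the inequality is the general bound $\abs{\Tr X}\leq\onenorm{X}$ applied to $X=\sqrt\rho\sqrt\sigma$.

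It then remains to prove the purely classical inequality $\sum_x\sqrt{P_xQ_x}\geq 2^{-\frac12\relent{P}{Q}}$, which I would obtain from Jensen's inequality applied to the concave function $\log$:
$$\log\sum_x \sqrt{P_xQ_x} = \log\,\expec{x\sim P}{\sqrt{Q_x/P_x}} \;\geq\; \expec{x\sim P}{\tfrac12\log\br{Q_x/P_x}} = -\tfrac12\relent{P}{Q}.$$
Chaining the three displays yields $\F(\rho,\sigma)\geq\sum_{i,j}\sqrt{P_{ij}Q_{ij}}\geq 2^{-\frac12\relent{P}{Q}} = 2^{-\frac12\relent{\rho}{\sigma}}$, the main assertion. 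For the stated consequence I would use $e^{-t}\geq 1-t$: writing $2^{-\frac12\relent{\rho}{\sigma}} = e^{-\frac{\ln 2}{2}\relent{\rho}{\sigma}}\geq 1 - \frac{\ln 2}{2}\relent{\rho}{\sigma}$ gives $1-\F(\rho,\sigma)\leq\frac{\ln 2}{2}\relent{\rho}{\sigma}$, and $\frac{\ln 2}{2}<1$ produces the final bound by $\relent{\rho}{\sigma}$.

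The main obstacle is the identity $\relent{P}{Q}=\relent{\rho}{\sigma}$: it is the only place where the noncommutativity of $\rho$ and $\sigma$ genuinely enters, and it hinges on the cancellation of the overlap factors $\abs{\braket{e_i}{f_j}}^2$ between the two logarithmic terms together with the normalisation $\sum_j\abs{\braket{e_i}{f_j}}^2=1$. An alternative route avoiding the Nussbaum--Szko\l a construction is to invoke the Fuchs--Caves variational formula $\F(\rho,\sigma)=\min_{\{M_x\}}\sum_x\sqrt{\Tr(M_x\rho)\Tr(M_x\sigma)}$ and combine it with the data-processing inequality for relative entropy (Fact~\ref{fact:monotonequantumoperation}) applied to the optimal measurement channel, which reduces again to the same classical Bhattacharyya bound; this merely shifts the burden onto citing Fuchs--Caves instead of verifying the distribution identity.
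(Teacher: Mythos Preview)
The paper does not actually prove this statement; it is listed among the preliminary facts with a citation and no argument is given. Your proof via the Nussbaum--Szko\l a distributions is correct: the identity $\relent{P}{Q}=\relent{\rho}{\sigma}$ holds by the computation you outline (the overlap factors $\abs{\braket{e_i}{f_j}}^2$ indeed cancel in the ratio and sum to $1$ when isolating the $\Tr(\rho\log\rho)$ term), the Bhattacharyya coefficient of $(P,Q)$ equals $\Tr(\sqrt{\rho}\sqrt{\sigma})\leq\F(\rho,\sigma)$, and the Jensen step on the concave $\log$ gives the classical bound. The deduction of the additive consequence from $e^{-t}\geq 1-t$ is also correct. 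So you have supplied a valid self-contained proof where the paper only cites one; the alternative Fuchs--Caves route you sketch (optimal measurement plus data processing for $\relent{\cdot}{\cdot}$) is equally valid and would draw only on facts the paper already records.
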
	
\suppress{
\begin{fact}[Joint convexity of relative entropy][\cite{Watrouslecturenote}, Theorem 11.2] 
\label{jointconvrel}
Let $\rho^0_A,\rho^1_A,\sigma^0_A,\sigma^1_A \in \mathcal{D}(A)$ be quantum states and let $\lambda \in [0,1]$. Then 
$$\relent{\lambda\rho^0_A + (1-\lambda)\rho^1_A}{\lambda\sigma^0_A + (1-\lambda)\sigma^1_A} \leq \lambda \relent{\rho^0_A}{\sigma^0_A} + (1-\lambda)\relent{\rho^1_A}{\sigma^1_A} .$$ 
\end{fact}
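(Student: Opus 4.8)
The plan is to deduce joint convexity from \emph{Lieb's concavity theorem}, which asserts that for each fixed $s \in [0,1]$ the trace functional $(\rho,\sigma) \mapsto \Tr(\rho^{s}\sigma^{1-s})$ is jointly concave on pairs of positive semidefinite operators. Granting this, the idea is to realize $\relent{\rho}{\sigma}$ as a limit of difference quotients of this functional as $s \to 1^{-}$, to observe that each such difference quotient is jointly convex, and then to invoke the fact that a pointwise limit of convex functions is convex. Specializing the resulting joint convexity to the combinations $\lambda\rho^{0}+(1-\lambda)\rho^{1}$ and $\lambda\sigma^{0}+(1-\lambda)\sigma^{1}$ yields precisely the claimed inequality.

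Concretely, I would first assume $\rho,\sigma$ are full rank, deferring the general case to a continuity argument and noting that the inequality is trivial whenever the right-hand side is $+\infty$, i.e. whenever $\mathrm{supp}\,\rho \not\subseteq \mathrm{supp}\,\sigma$, which forces $\relent{\rho}{\sigma}=+\infty$. Writing $L_{s}(\rho,\sigma) := \Tr(\rho^{s}\sigma^{1-s})$, I would differentiate at $s=1$ using $\frac{d}{ds}\rho^{s} = \rho^{s}\log\rho$ and $\frac{d}{ds}\sigma^{1-s} = -\sigma^{1-s}\log\sigma$ to obtain $\frac{d}{ds}\big|_{s=1} L_{s}(\rho,\sigma) = \Tr(\rho\log\rho) - \Tr(\rho\log\sigma) = \relent{\rho}{\sigma}$. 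Since $L_{1}(\rho,\sigma) = \Tr\rho = 1$ for states, this gives the representation
$$\relent{\rho}{\sigma} = \lim_{s\to 1^{-}} \frac{\Tr\rho - \Tr(\rho^{s}\sigma^{1-s})}{1-s}.$$
The convexity is then immediate: for each fixed $s\in(0,1)$ the numerator $\Tr\rho - L_{s}(\rho,\sigma)$ is a linear function of $\rho$ minus a jointly concave function of $(\rho,\sigma)$, hence jointly convex in $(\rho,\sigma)$, and dividing by the positive constant $1-s$ preserves joint convexity. Taking the pointwise limit $s\to1^{-}$ shows $\relent{\cdot}{\cdot}$ is jointly convex.

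The main obstacle is, of course, Lieb's concavity theorem itself, which carries essentially all the genuine difficulty (separate convexity in each argument is elementary via Klein's inequality, but joint convexity is strictly deeper). I would prove it by the complex-interpolation method: for full-rank $\rho,\sigma$ one considers the analytic family $z \mapsto \Tr(\rho^{z}\sigma^{1-z})$ recast on the tensor product $\H\otimes\H$ through the relative modular operator, so that joint concavity in $(\rho,\sigma)$ becomes a statement about a single positive operator, and then applies the Hadamard three-lines lemma together with the operator monotonicity of $x\mapsto x^{s}$; an alternative is Ando's route via the operator concavity of $(A,B)\mapsto A^{s}\otimes B^{1-s}$ read through the theory of operator means.

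The remaining care is routine but should be flagged: I would justify the interchange of the $s\to1^{-}$ limit with the convex-combination structure by dominated convergence on the fixed finite-dimensional space (using monotonicity of $s\mapsto L_{s}$), and I would reduce the rank-deficient case to the full-rank one by perturbing $\sigma \mapsto (1-\eta)\sigma + \eta I$, applying the established inequality, and letting $\eta\to0^{+}$, invoking lower semicontinuity of relative entropy to recover the general statement including the infinite-valued cases.
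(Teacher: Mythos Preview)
Your proof sketch is correct and follows a standard route: representing $\relent{\rho}{\sigma}$ as the derivative at $s=1$ of the quasi-entropy $\Tr(\rho^{s}\sigma^{1-s})$, invoking Lieb's concavity for each fixed $s\in(0,1)$, and passing to the limit. The handling of rank-deficient states and the $+\infty$ cases is also fine.

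However, there is nothing to compare against: the paper does \emph{not} prove this statement. It is listed as a ``Fact'' in the preliminaries and simply cited from Watrous's lecture notes (Theorem~11.2 there). The authors use joint convexity as a black box inside the proof of the convex-split lemma (Lemma~\ref{convexcomb}) and never revisit its justification. So your proposal supplies a full argument where the paper offers only a reference; any correct proof---yours via Lieb, or the one in the cited source (which in Watrous's notes also ultimately rests on Lieb/Ando-type operator concavity)---would serve equally well for the paper's purposes.
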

}

\suppress{
We will need the following lemmas.
\begin{lemma}[\cite{NielsenC00}, Equation 11.135]
\label{classicalrelent}
Let $p$ and $q$ be probability distributions on $[m]$. Given quantum states $\rho^1_A,\rho^2_A\ldots \rho^m_A,\sigma^1_A,\sigma^2_A\ldots \sigma^m_A \in \mathcal{D}(A)$, let $\rho_A \defeq \sum_{x=1}^m p(x)\rho^x_A$ and $\sigma_A \defeq \sum_{x=1}^m q(x)\sigma^x_A$. Then 
$\relent{\rho_A}{\sigma_A} \leq \relent{p}{q} + \sum_{x=1}^m p(x)\relent{\rho^x_A}{\sigma^x_A}$.
\end{lemma}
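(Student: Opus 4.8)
The plan is to lift the two mixtures to classical--quantum states over an auxiliary flag register and then invoke monotonicity of relative entropy under partial trace. This is the standard ``data-processing'' argument for such convexity-type bounds.

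First I would dispose of the degenerate cases. If $\relent{p}{q}=+\infty$, or if $\relent{\rho_x}{\sigma_x}=+\infty$ for some $x$ with $p(x)>0$, the right-hand side is $+\infty$ and there is nothing to prove; so I may assume $\mathrm{supp}(p)\subseteq\mathrm{supp}(q)$ and $\mathrm{supp}(\rho_x)\subseteq\mathrm{supp}(\sigma_x)$ whenever $p(x)>0$, which guarantees all the quantities appearing below are finite. Next, introduce a register $X$ with orthonormal basis $\set{\ket{x}}_{x\in[m]}$ and define the classical--quantum states
$$\hat\rho_{XA}\defeq\sum_{x=1}^m p(x)\,\ketbra{x}\otimes\rho_x,\qquad \hat\sigma_{XA}\defeq\sum_{x=1}^m q(x)\,\ketbra{x}\otimes\sigma_x.$$
Tracing out $X$ gives $\partrace{X}{\hat\rho_{XA}}=\rho$ and $\partrace{X}{\hat\sigma_{XA}}=\sigma$, so Fact~\ref{fact:monotonequantumoperation} (applied to the partial-trace channel) yields $\relent{\rho}{\sigma}\leq\relent{\hat\rho_{XA}}{\hat\sigma_{XA}}$.

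It then remains to evaluate the larger relative entropy, which is a direct computation. Since $\hat\rho_{XA}$ and $\hat\sigma_{XA}$ are block-diagonal with respect to the common decomposition $\H_X\otimes\H_A=\bigoplus_x\ket{x}\otimes\H_A$, their logarithms are block-diagonal as well, so $\relent{\hat\rho_{XA}}{\hat\sigma_{XA}}=\sum_{x=1}^m\br{\Tr\br{p(x)\rho_x\log(p(x)\rho_x)}-\Tr\br{p(x)\rho_x\log(q(x)\sigma_x)}}$. Using $\log(p(x)\rho_x)=(\log p(x))I_A+\log\rho_x$ and likewise for $q(x)\sigma_x$, each summand collapses to $p(x)\log\tfrac{p(x)}{q(x)}+p(x)\relent{\rho_x}{\sigma_x}$; summing over $x$ gives exactly $\relent{p}{q}+\sum_{x=1}^m p(x)\relent{\rho_x}{\sigma_x}$, and combining with the inequality from the previous paragraph finishes the proof.

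There is no real obstacle here: the only subtlety is the support/finiteness issue, which is handled by the reduction to the non-degenerate case at the outset, and after that the argument is just the block-diagonal identity for relative entropy together with data processing. If one wanted to avoid even the support discussion, one could instead note that both sides are lower-semicontinuous in $(q,\sigma_1,\dots,\sigma_m)$ and argue by a limiting perturbation of $q$ and the $\sigma_x$ toward full support, but the case split is cleaner.
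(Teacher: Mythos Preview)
Your argument is correct and is essentially the same as the paper's: both lift $\rho,\sigma$ to the classical--quantum states $\sum_x p(x)\ketbra{x}\otimes\rho_x$ and $\sum_x q(x)\ketbra{x}\otimes\sigma_x$, invoke monotonicity of relative entropy under the partial trace over the flag register (Fact~\ref{fact:monotonequantumoperation}), and then evaluate the lifted relative entropy block-by-block. The only additional content in your write-up is the explicit handling of the support/finiteness degeneracies, which the paper omits.
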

\begin{proof}
Introduce a register $X$ with $|X|=m$. Define $\rho'_{AX} \defeq \sum_{x=1}^m p(x)\rho^x_A\otimes\ketbra{x}_X$ and $\sigma'_{AX} \defeq \sum_{x=1}^m q(x)\sigma^x_A\otimes\ketbra{x}_X$. Consider, 
\begin{align*}
\relent{\rho_A}{\sigma_A} &\leq \relent{\rho'_{AX}}{\sigma'_{AX}} \quad \mbox{(monotonicity of relative entropy under quantum operation, Fact~\ref{fact:monotonequantumoperation})}\\
& = \sum_x p(x) \cdot \Tr(\rho^x_A \otimes \ketbra{x}_X)(\log\rho'_{AX}-\log\sigma'_{AX}) \\ 
& = \sum_x p(x)\left[\log(p(x))+\Tr\rho^x_A\log(\rho^x_A)-\log(q(x))-\Tr(\rho^x_A\log(\sigma^x_A))\right] \\ 
& = \sum_x p(x)\log\left(\frac{p(x)}{q(x)}\right) + \sum_x p(x)\relent{\rho^x_A}{\sigma^x_A} \\
& =  \relent{p}{q} + \sum_x p(x)\relent{\rho^x_A}{\sigma^x_A} .
\end{align*}
\end{proof}	
}
\begin{lemma}
\label{lem:pureclose}
Let $\epsilon>0$. Let $\ketbra{\psi}_{A}\in\mathcal{D}(A)$ be a pure state and let $\rho_{AB}\in\mathcal{D}(AB)$ be a state such that $\F(\ketbra{\psi}_A, \rho_A) \geq 1 - \varepsilon$. There exists a state $\theta_B\in\mathcal{D}(B)$ such that $\F(\ketbra{\psi}_A \otimes \theta_B, \rho_{AB}) \geq 1 - \varepsilon$.
\end{lemma}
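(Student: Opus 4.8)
The plan is to produce $\theta_B$ explicitly as the normalized state of register $B$ conditioned on projecting register $A$ onto $\ket{\psi}$, and then to verify by a short computation that this choice in fact makes $\F(\ketbra{\psi}_A\otimes\theta_B,\rho_{AB})$ \emph{equal} to $\F(\ketbra{\psi}_A,\rho_A)$, which is $\geq 1-\varepsilon$ by hypothesis.

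First I would dispose of the degenerate case $\varepsilon\geq 1$: then any state $\theta_B$ works since $\F\geq 0\geq 1-\varepsilon$. So assume $\varepsilon<1$ and set $p\defeq\bra{\psi}\rho_A\ket{\psi}=\F^2(\ketbra{\psi}_A,\rho_A)\geq(1-\varepsilon)^2>0$. Define
$$\theta_B\defeq\frac{1}{p}\,(\bra{\psi}_A\otimes I_B)\,\rho_{AB}\,(\ket{\psi}_A\otimes I_B).$$
Since $\rho_{AB}\succeq 0$ this operator is positive semidefinite, and $\Tr\theta_B=\frac1p\Tr\!\big((\ketbra{\psi}_A\otimes I_B)\rho_{AB}\big)=\frac1p\bra{\psi}\rho_A\ket{\psi}=1$, so $\theta_B\in\mathcal{D}(B)$.

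Next I would compute the fidelity directly. Writing $\omega\defeq\ketbra{\psi}_A\otimes\theta_B$, the projector identity $\sqrt{\ketbra{\psi}_A}=\ketbra{\psi}_A$ gives $\sqrt{\omega}=\ketbra{\psi}_A\otimes\sqrt{\theta_B}$. Using $\F(\omega,\rho_{AB})=\onenorm{\sqrt{\omega}\sqrt{\rho_{AB}}}=\Tr\sqrt{\sqrt{\omega}\,\rho_{AB}\,\sqrt{\omega}}$ together with $(\bra{\psi}_A\otimes I_B)\rho_{AB}(\ket{\psi}_A\otimes I_B)=p\,\theta_B$, one finds
$$\sqrt{\omega}\,\rho_{AB}\,\sqrt{\omega}=\ketbra{\psi}_A\otimes\big(\sqrt{\theta_B}\,(p\,\theta_B)\,\sqrt{\theta_B}\big)=\ketbra{\psi}_A\otimes p\,\theta_B^2,$$
hence $\sqrt{\sqrt{\omega}\,\rho_{AB}\,\sqrt{\omega}}=\ketbra{\psi}_A\otimes\sqrt{p}\,\theta_B$, so that $\F(\omega,\rho_{AB})=\sqrt{p}\,\Tr\theta_B=\sqrt{p}=\F(\ketbra{\psi}_A,\rho_A)\geq 1-\varepsilon$, as required.

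There is no genuine obstacle here: the only points needing care are that $\theta_B$ is a bona fide density operator (which uses only $\rho_{AB}\succeq 0$ and a trace computation) and that the square-root manipulations are valid (immediate, since $\ketbra{\psi}_A$ is a rank-one projector, so tensoring with it commutes with taking operator square roots in register $B$). One could alternatively derive the conclusion from Uhlmann's theorem (Fact~\ref{uhlmann}), applied to a purification of $\rho_{AB}$ and the purification $\ket{\psi}_A\otimes\ket{\theta}_{BB'}$ of $\ketbra{\psi}_A\otimes\theta_B$, but the direct argument above is shorter and yields the sharp equality $\F(\ketbra{\psi}_A\otimes\theta_B,\rho_{AB})=\F(\ketbra{\psi}_A,\rho_A)$.
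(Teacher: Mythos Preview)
Your proof is correct. It differs from the paper's argument, which instead purifies $\rho_{AB}$ to $\ket{\rho}_{ABC}$, invokes Uhlmann's theorem (Fact~\ref{uhlmann}) to obtain a pure state $\ket{\theta}_{BC}$ with $\F(\ketbra{\psi}_A\otimes\ketbra{\theta}_{BC},\ketbra{\rho}_{ABC})=\F(\ketbra{\psi}_A,\rho_A)$, and then traces out $C$ using monotonicity of fidelity (Fact~\ref{fact:monotonequantumoperation}). Your approach is more explicit and self-contained: you write down $\theta_B$ concretely as the post-measurement state on $B$ after projecting $A$ onto $\ket{\psi}$, and a direct fidelity computation gives the sharp equality $\F(\ketbra{\psi}_A\otimes\theta_B,\rho_{AB})=\F(\ketbra{\psi}_A,\rho_A)$ rather than merely an inequality. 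In fact the two constructions produce the \emph{same} $\theta_B$ (the Uhlmann optimizer $\ket{\theta}_{BC}$ is proportional to $(\bra{\psi}_A\otimes I_{BC})\ket{\rho}_{ABC}$, whose $B$-marginal is exactly your state), so your argument can be read as unpacking the paper's proof while avoiding the black-box appeals to Uhlmann and monotonicity. The paper's version is slightly shorter to write and relies only on standard cited facts; yours is more informative and establishes the stronger statement that equality holds.
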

\begin{proof}
Introduce a register $C$ such that $|C|=|A||B|$. Let $\ket{\rho}_{ABC}\in\mathcal{D}(ABC)$ be a purification of $\rho_{AB}$. Using Uhlmann's theorem (Fact~\ref{uhlmann}) we get a pure state $\theta_{BC}$ such that 
\begin{align*}
1 - \varepsilon &\leq \F(\ketbra{\psi}_A, \rho_{A}) \\
&= \F(\ketbra{\psi}_A \otimes \ketbra{\theta}_{BC}, \ketbra{\rho}_{ABC})  \\
&\leq \F(\ketbra{\psi}_A \otimes \theta_{B}, \rho_{AB}) . \quad \mbox{(monotonicity of fidelity under quantum operation, Fact~\ref{fact:monotonequantumoperation})} 
\end{align*} 
\end{proof}

The following lemma is a tighter version of (one-sided) convexity of relative entropy.

\begin{lemma}
\label{relentconcav}
Let $\mu_1,\mu_2,\ldots \mu_n, \theta$ be quantum states and $\{p_1,p_2,\ldots p_n\}$ be a probability distribution. Let $\mu=\sum_i p_i\mu_i$ be the average state. Then 

$$\relent{\mu}{\theta} =\sum_i p_i(\relent{\mu_i}{\theta}-\relent{\mu_i}{\mu}).$$
\end{lemma}
\begin{proof}
Proof proceeds by direct calculation. Consider
\begin{eqnarray*}
&&\sum_i p_i(\relent{\mu_i}{\theta}-\relent{\mu_i}{\mu}) = \sum_i p_i(\Tr(\mu_i\log\mu_i)-\Tr(\mu_i\log\theta) - \Tr(\mu_i\log\mu_i) + \Tr(\mu_i\log\mu))\\ && = \Tr(\sum_ip_i\mu_i\log(\mu)) - \Tr(\sum_ip_i\mu_i\log\theta)= \Tr(\mu\log\mu) - \Tr(\mu\log\theta) = \relent{\mu}{\theta} .
\end{eqnarray*}
\end{proof}

\section{A convex-split lemma}

\label{sec:convexcomb}
We revisit the statement of convex split lemma and state its connection to a previous work. The lemma has been proved in main text. 
\begin{lemma}[Convex-split lemma]
\label{convexcomb}
Let $\rho_{PQ}\in\mathcal{D}(PQ)$ and $\sigma_Q\in\mathcal{D}(Q)$ be quantum states such that $\text{supp}(\rho_Q)\subset\text{supp}(\sigma_Q)$.  Let $k \defeq \dmax{\rho_{PQ}}{\rho_P\otimes\sigma_Q}$. Define the following state
\begin{equation}
\tau_{PQ_1Q_2\ldots Q_n} \defeq  \frac{1}{n}\sum_{j=1}^n \rho_{PQ_j}\otimes\sigma_{Q_1}\otimes \sigma_{Q_2}\ldots\otimes\sigma_{Q_{j-1}}\otimes\sigma_{Q_{j+1}}\ldots\otimes\sigma_{Q_n}
\end{equation}
on $n+1$ registers $P,Q_1,Q_2,\ldots Q_n$, where $\forall j \in [n]: \rho_{PQ_j} = \rho_{PQ}$ and $\sigma_{Q_j}=\sigma_Q$.  Then, $$\relent{\tau_{PQ_1Q_2\ldots Q_n}}{\tau_P \otimes \sigma_{Q_1}\otimes\sigma_{Q_2}\ldots \otimes \sigma_{Q_n}} \leq \log(1+\frac{2^k}{n}).$$ Using Pinsker's inequality (Fact \ref{pinsker}), we conclude, $$ \F^{2}(\tau_{PQ_1Q_2\ldots Q_n},\tau_P \otimes \sigma_{Q_1}\otimes\sigma_{Q_2}\ldots \otimes \sigma_{Q_n}) \geq \frac{1}{1+\frac{2^k}{n}}.$$ In particular, for $\delta \in (0,1/3) $ and $ n= \lceil\frac{2^k}{\delta}\rceil$, 
 $$\relent{\tau_{PQ_1Q_2\ldots Q_n}}{\tau_P \otimes \sigma_{Q_1}\otimes\sigma_{Q_2}\ldots \otimes \sigma_{Q_n}} \leq \log(1+\delta)$$ and
$$ \F^{2}(\tau_{PQ_1Q_2\ldots Q_n},\tau_P \otimes \sigma_{Q_1}\otimes\sigma_{Q_2}\ldots \otimes \sigma_{Q_n}) \geq 1 - \delta.$$ 
\end{lemma}
\bigskip

The proof is as follows. 
\begin{proof}[Proof of Convex-split Lemma] 
We use the abbreviation $\sigma^{-j} \defeq \sigma_{Q_1}\ldots\otimes\sigma_{Q_{j-1}}\otimes \sigma_{Q_{j+1}}\ldots \otimes \sigma_{Q_n}$ and $\sigma\defeq \sigma_{Q_1}\otimes \sigma_{Q_2}\ldots \sigma_{Q_n}$. Then $\tau_{PQ_1Q_2\ldots Q_n}=\frac{1}{n}\sum_{j=1}^n\rho_{PQ_j}\otimes \sigma^{-j}$. Now, we use Lemma \ref{relentconcav} to express
\begin{eqnarray}
\label{eq:convsplit}
&& \relent{\tau_{PQ_1\ldots Q_n}}{\rho_P\otimes \sigma} = \frac{1}{n}\sum_j \relent{\rho_{PQ_j}\otimes \sigma^{-j}}{\rho_P\otimes\sigma}  - \frac{1}{n}\sum_j\relent{\rho_{PQ_j}\otimes \sigma^{-j}}{\tau_{PQ_1Q_2\ldots Q_n}}.
\end{eqnarray}
The first term in the summation on right hand side, $\relent{\rho_{PQ_j}\otimes \sigma^{-j}}{\rho_P\otimes \sigma}$, is equal to $\relent{\rho_{PQ_j}}{\rho_P\otimes \sigma_{Q_j}}$. 

The second term $\relent{\rho_{PQ_j}\otimes \sigma^{-j}}{\tau_{PQ_1Q_2\ldots Q_n}}$ is lower bounded by $\relent{\rho_{PQ_j}}{\tau_{PQ_j}}$, as relative entropy decreases under partial trace. But observe that $\tau_{PQ_j} = \frac{1}{n}\rho_{PQ_j}+(1-\frac{1}{n})\rho_P\otimes\sigma_{Q_j}$. By assumption, $\rho_{PQ_j} \leq 2^k\rho_P\otimes \sigma_{Q_j}$. Hence $\tau_{PQ_j} \leq (1+\frac{2^k-1}{n})\rho_P\otimes\sigma_{Q_j}$. Since $\log(A)\leq \log(B)$ if $A \leq B$ for positive semidefinite matrices $A$ and $B$ (see for example, \cite{carlen}), we have
\begin{eqnarray*}
&&\relent{\rho_{PQ_j}}{\tau_{PQ_j}} = \Tr(\rho_{PQ_j}\log\rho_{PQ_j}) - \Tr(\rho_{PQ_j}\log\tau_{PQ_j}) \\ &&\geq \Tr(\rho_{PQ_j}\log\rho_{PQ_j})  - \Tr(\rho_{PQ_j}\log(\rho_P\otimes\sigma_{Q_j})) - \log(1 + \frac{2^k-1}{n}) \\ && = \relent{\rho_{PQ_j}}{\rho_P\otimes\sigma_{Q_j}}- \log(1 + \frac{2^k-1}{n}) .
\end{eqnarray*}
Using in Equation \ref{eq:convsplit}, we find that
\begin{eqnarray*}
\relent{\tau_{PQ_1Q_2\ldots Q_n}}{\rho_P\otimes \sigma}&\leq& \frac{1}{n}\sum_j \relent{\rho_{PQ_j}}{\rho_P\otimes\sigma_{Q_j}} - \frac{1}{n}\sum_j\relent{\rho_{PQ_j}}{\rho_P\otimes\sigma_{Q_j}}+ \log(1 + \frac{2^k-1}{n})\\ &=& \log(1 + \frac{2^k-1}{n}).
\end{eqnarray*}
Thus, the lemma follows.  
\end{proof}

\textit{Connection to previous work:}
Following result appears as main theorem in the work of Csiszar \textit{et. al.}\cite{petz07},
 $$\lim_{n\rightarrow\infty}\relent{\tau_{Q_1Q_2\ldots Q_n}}{\sigma_{Q_1}\otimes\sigma_{Q_2}\ldots \sigma_{Q_n}}=0.$$ This is a special case of convex-split lemma in the limit $\delta\rightarrow 0$ (and hence $n\rightarrow \infty$) when the register $P$ is trivial. But it is also equivalent to convex-split lemma in the limit $\delta\rightarrow 0$ (and hence $n\rightarrow \infty$), as we argue below. Given an arbitrary hermitian operator $M\in \mathcal{L}(P)$, consider the normalized states $\rho'_{Q}=\frac{\Tr_P(M\rho_{PQ})}{\Tr(M\rho_P)}$ and $\tau'_{Q_1Q_2\ldots Q_n} = \frac{\Tr_P(M\tau_{PQ_1Q_2\ldots Q_n})}{\Tr(M\tau_P)}$. It is easy to observe that 
\begin{eqnarray*}
&& \tau'_{Q_1Q_2\ldots Q_n} \defeq \\ &&\frac{1}{n}\sum_{j=1}^n \rho'_{Q_j}\otimes\sigma_{Q_1}\otimes\ldots\otimes\sigma_{Q_{j-1}}\otimes\sigma_{Q_{j+1}}\otimes\ldots\otimes\sigma_{Q_n}
\end{eqnarray*}

 From the main theorem in \cite{petz07}, this state is arbitrarily close to $\sigma_{Q_1}\otimes\sigma_{Q_2}\ldots \otimes\sigma_{Q_n}$, for large enough $n$. This means that any measurement $M\in\mathcal{L}(P)$ on the state $\tau_{PQ_1Q_2\ldots Q_n}$  does not change the marginal on registers $Q_1Q_2\ldots Q_n$. Thus registers $P$ and $Q_1Q_2\ldots Q_n$ are independent in the state $\tau_{PQ_1Q_2\ldots Q_n}$. This coincides with the statement of convex-split lemma if we let $\delta\rightarrow 0$ (and hence $n\rightarrow \infty$).

\section{Compression of one-way quantum message}

Consider a state $\Phi_{RAMB}$ shared between $\Alice(AM)$, $\Bob(B)$ and $\Referee(R)$. The register $M$ serves as a message register, which $\Alice$ sends to $\Bob$. Following theorem shows that this message can be compressed. An idea of the proof appears in the Figure \ref{fig:convexexplanation}.

\begin{figure}[ht]
\centering
\begin{tikzpicture}[xscale=1.2,yscale=1.4]

\draw[fill] (1.0,8.5) circle [radius=0.05];
\draw[fill] (0.5,7.5) circle [radius=0.05];
\draw[fill] (-0.5,7.5) circle [radius=0.05];
\draw[thick] (1.0,8.5) -- (0.5,7.5);
\draw[thick] (1.0,8.5) -- (-0.5,7.5);
\draw[thick] (-0.5,7.5) -- (0.5,7.5);
\node at (1.2,7.6) {$\ket{\Phi}_{RBAM}$};

\draw[fill] (2.0,7) circle [radius=0.05];
\draw[fill] (0.0,6) circle [radius=0.05];
\draw[fill] (0.0,7) circle [radius=0.05];
\draw[fill] (2.0,6) circle [radius=0.05];
\draw[fill] (2.0,5) circle [radius=0.05];
\draw[fill] (0.0,5) circle [radius=0.05];
\draw[fill] (2.0,2) circle [radius=0.05];
\draw[fill] (0.0,2) circle [radius=0.05];
\draw[thick] (0,7) rectangle (2,2);

\draw[fill] (1.0,4.1) circle [radius=0.02];
\draw[fill] (1.0,3.7) circle [radius=0.02];
\draw[fill] (1.0,3.3) circle [radius=0.02];
\draw[fill] (1.0,2.9) circle [radius=0.02];

\node at (1.1,8.3) {RB};
\node at (-0.5,7.7) {A};
\node at (0.5,7.3) {M};
\node at (-0.3,7) {$L_1$};
\node at (-0.3,6) {$L_2$};
\node at (-0.3,5) {$L_3$};
\node at (-0.3,2) {$L_n$};
\node at (2.3,7) {$M_1$};
\node at (2.3,6) {$M_2$};
\node at (2.3,5) {$M_3$};
\node at (2.3,2) {$M_n$};

\node at (1,9) {$\Referee$};
\node at (-0.5,8.5) {$\Alice$};
\node at (2,8.5) {$\Bob$};

\node at (1,1.5) {$\ket{\theta}_{L_1L_2\ldots L_nM_1M_2\ldots M_n}$};

%\draw[gray] (3.0,9)--(3.0,0);

\node at (4.8,4.5) {$\frac{1}{\sqrt{n}}$};

\draw[fill] (7.5,7) circle [radius=0.05];
\draw[fill] (6.5,8) circle [radius=0.05];
\draw[fill] (5.5,7) circle [radius=0.05];
\draw[thick] (6.5,8) -- (7.5,7);
\draw[thick] (5.5,7) -- (7.5,7);
\draw[thick] (6.5,8) -- (5.5,7);
\node at (6.6,7.3) {$\ket{\Phi}_{RBL_1M_1}$};
\node at (7.5,6.5) {$\otimes$};
\node at (6.6,6.2) {$\ket{\sigma}_{L_2M_2}$};
\draw[fill] (7.5,6) circle [radius=0.05];
\draw[fill] (5.5,6) circle [radius=0.05];
\draw[thick] (5.5,6)--(7.5,6);
\node at (7.5,5.5) {$\otimes$};
\draw[fill] (7.5,5) circle [radius=0.05];
\draw[fill] (5.5,5) circle [radius=0.05];
\draw[thick] (5.5,5)--(7.5,5);
\node at (7.5,4.5) {$\otimes$};
\draw[fill] (7.5,4.1) circle [radius=0.02];
\draw[fill] (7.5,3.7) circle [radius=0.02];
\draw[fill] (7.5,3.3) circle [radius=0.02];
\draw[fill] (7.5,2.9) circle [radius=0.02];
\node at (7.5,2.5) {$\otimes$};
\draw[fill] (7.5,2) circle [radius=0.05];
\draw[fill] (5.5,2) circle [radius=0.05];
\draw[thick] (5.5,2)--(7.5,2);
\node at (6.6,2.2) {$\ket{\sigma}_{L_nM_n}$};
\node at (7.5,1.5) {$\otimes$};
\node at (7.5,1.0) {$\ket{1}_J$};

\node at (6.5,8.7) {$\Referee$};
\node at (5.5,8.2) {$\Alice$};
\node at (7.5,8.2) {$\Bob$};

\node at (6.5,8.2) {RB};
\node at (5.2,7) {$L_1$};
\node at (5.2,6) {$L_2$};
\node at (5.2,5) {$L_3$};
\node at (5.2,2) {$L_n$};
\node at (7.8,7) {$M_1$};
\node at (7.8,6) {$M_2$};
\node at (7.8,5) {$M_3$};
\node at (7.8,2) {$M_n$};

\node at (8.3,4.5) {$+$};

\node at (8.9,4.5) {$\frac{1}{\sqrt{n}}$};

\draw[fill] (11.5,7) circle [radius=0.05];
\draw[fill] (10.5,8) circle [radius=0.05];
\draw[fill] (9.5,6) circle [radius=0.05];
\draw[fill] (9.5,7) circle [radius=0.05];
\draw[thick] (11.5,7) -- (11.05,7);
\draw[thick] (10.95,7) -- (10.05,7);
\draw[thick] (9.95,7) -- (9.5,7);
\draw[thick] (10.5,8) -- (11.5,6);
\draw[thick] (11.5,6) -- (9.5,6);
\draw[thick] (9.5,6) -- (10.5,8);
\node at (10.5,6.2) {$\ket{\Phi}_{RBL_2M_2}$};
\node at (11.5,6.5) {$\otimes$};
\draw[fill] (11.5,6) circle [radius=0.05];
\node at (11.5,5.5) {$\otimes$};
\draw[fill] (11.5,5) circle [radius=0.05];
\draw[fill] (9.5,5) circle [radius=0.05];
\draw[thick] (9.5,5) -- (11.5,5);
\node at (11.5,4.5) {$\otimes$};
\draw[fill] (11.5,4.1) circle [radius=0.02];
\draw[fill] (11.5,3.7) circle [radius=0.02];
\draw[fill] (11.5,3.3) circle [radius=0.02];
\draw[fill] (11.5,2.9) circle [radius=0.02];
\node at (11.5,2.5) {$\otimes$};
\draw[fill] (11.5,2) circle [radius=0.05];
\draw[fill] (9.5,2) circle [radius=0.05];
\draw[thick] (9.5,2) -- (11.5,2);
\node at (10.5,2.2) {$\ket{\sigma}_{L_nM_n}$};
\node at (11.5,1.5) {$\otimes$};
\node at (11.5,1.0) {$\ket{2}_J$};

\node at (10.5,8.7) {$\Referee$};
\node at (9.5,8.2) {$\Alice$};
\node at (11.5,8.2) {$\Bob$};

\node at (10.5,8.2) {RB};
\node at (9.2,7) {$L_1$};
\node at (9.2,6) {$L_2$};
\node at (9.2,5) {$L_3$};
\node at (9.2,2) {$L_n$};
\node at (11.8,7) {$M_1$};
\node at (11.8,6) {$M_2$};
\node at (11.8,5) {$M_3$};
\node at (11.8,2) {$M_n$};

\draw[fill] (12.1,4.5) circle [radius=0.02];
\draw[fill] (12.5,4.5) circle [radius=0.02];
\draw[fill] (12.9,4.5) circle [radius=0.02];

\draw[->,thick] (2.9,4.5) -- (4.2,4.5);
\node at (3.5,4.7) {$V$};

\end{tikzpicture}
\caption{The state on left hand side $\ket{\Phi}_{RBAM} \otimes \ket{\theta}_{L_1L_2\ldots L_n M_1 \ldots M_n}$, a purification of $\Phi_{RB} \otimes\tau_{M_1 \ldots M_n}$. The state on right hand side is $\ket{\mu}_{JRBL_1L_2\ldots L_nM_1M_2\ldots M_n}$, a purification of $\tau_{RBM_1M_2\ldots M_n}$. Using convex-split lemma, Alice can apply an isometry $V$ on $\ket{\Phi}_{RBAM} \otimes \ket{\theta}_{L_1L_2\ldots L_n M_1 \ldots M_n}$ to obtain $\ket{\mu}_{JRBL_1L_2\ldots L_nM_1M_2\ldots M_n}$ with high fidelity.}
 \label{fig:convexexplanation}
\end{figure}
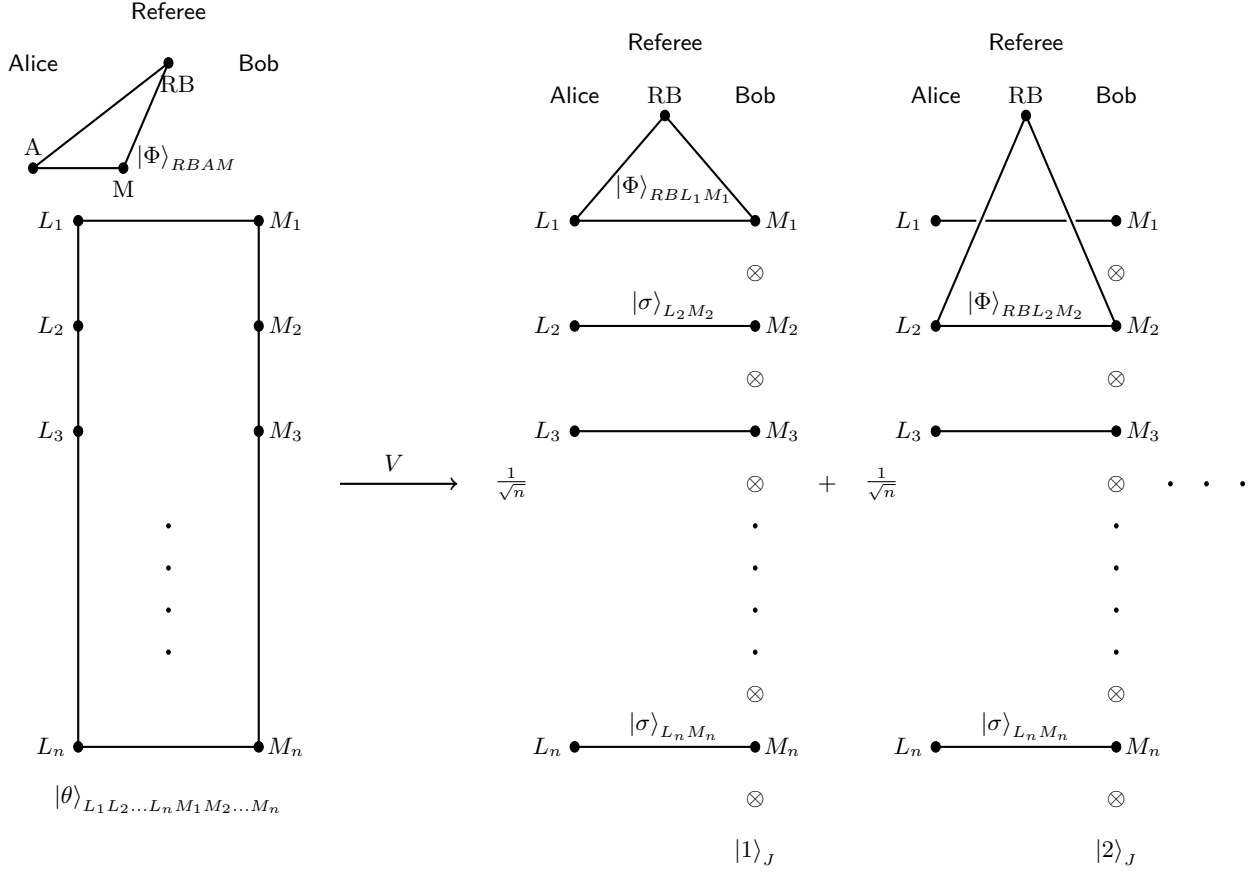

\begin{theorem}[Quantum message compression]\label{qmescompress}
There exists an entanglement-assisted one-way protocol $\P$, which takes as input $\ket{\Phi}_{RAMB}$ shared between three parties $\Referee~(R), \Bob~(B)$ and $\Alice~(AM)$ and outputs a state $\Phi'_{RAMB}$ shared between $\Referee~(R), \Bob~(BM)$ and $\Alice~(A)$ such that $\Phi'_{RAMB} \in  \ball{\eps}{\Psi_{RAMB}}$ and the number of qubits communicated by $\Alice$ to $\Bob$ in $\P$ is upper bounded by:
$$\frac{1}{2}\imax{RB}{M}_{\Phi} + \log\left(\frac{1}{\varepsilon}\right) .$$ 
\end{theorem}

\begin{proof}
Let $k \defeq \imax{RB}{M}_{\Phi}$, $\delta\defeq\varepsilon^2$ and $n \defeq \lceil\frac{2^k }{\delta}\rceil$. Let $\sigma_M$ be the state that achieves the infimum in the definition of $\imax{RB}{M}_{\Phi}$.  Consider the state, 
$$\mu_{RB M_1 \ldots M_n} \defeq \frac{1}{n}\sum_{j=1}^n \Phi_{RBM_j}\otimes\sigma_{M_1}\otimes\ldots\otimes\sigma_{M_{j-1}}\otimes\sigma_{M_{j+1}}\otimes\ldots \otimes\sigma_{M_n}.$$
Note that $\Phi_{RB} = \mu_{RB}$. Consider the following purification of $\mu_{RB M_1\ldots M_n}$, 
\begin{align*}
\lefteqn{\ket{\mu}_{RBJL_1\ldots L_n M_1\ldots M_n}}  \\ 
& & \defeq \frac{1}{\sqrt{n}}\sum_{j=1}^n\ket{j}_J\ket{\tilde{\Phi}}_{RBL_jM_j}\otimes \ket{\sigma}_{L_1M_1}\otimes \ldots\otimes\ket{\sigma}_{L_{j-1}M_{j-1}}\otimes\ket{\sigma}_{L_{j+1}M_{j+1}}\otimes\ldots\otimes\ket{\sigma}_{L_nM_n} 
\end{align*} 
Here, $\forall j\in[n]: \ket{\sigma}_{L_jM_j}$ is a purification of  $\sigma_{M_j}$ and $\ket{\tilde{\Phi}}_{RBL_jM_j}$ is a purification of $\Phi_{RBM_j}$. Consider the following protocol $\P_1$.
\begin{enumerate}
\item $\Alice$, $\Bob$ and $\Referee$ start by sharing the state $\ket{\mu}_{RBJL_1\ldots L_n M_1\ldots M_n}$ between themselves where $\Alice$ holds registers $JL_1\ldots L_n$, $\Referee$ holds the register $R$ and $\Bob$ holds the registers  $BM_1M_2\ldots M_n$.
\item $\Alice$ measures the register $J$ and sends the measurement outcome $j \in [n]$ to $\Bob$ using $\frac{\log(n)}{2}$ qubits  of quantum communication. $\Alice$ and $\Bob$ employ superdense coding (\cite{bennett92}) using fresh entanglement to achieve this. 
\item $\Alice$ swaps registers $L_j$ and $L_1$ and $\Bob$ swaps registers $M_j$ and $M_1$. Note that the joint state on the registers $RBL_1F_1$ at this stage is $\ket{\tilde{\Phi}}_{RBL_1M_1}$.
\item $\Alice$ applies an isometry $V: \H_{L_1} \rightarrow \H_{A}$ on the state $\ket{\tilde{\Phi}}_{RBL_1M_1}$ such that the joint state in registers $RAM_1B$ is $\Phi_{RBAM_1}$, as given by Uhlmann's theorem (Fact~\ref{uhlmann})
\end{enumerate}   

Consider the state,
$$\xi_{RB M_1\ldots M_n}  \defeq \Phi_{RB}\otimes \sigma_{M_1}\ldots \otimes \sigma_{M_n}. $$ 
Let $\ket{\theta}_{L_1\ldots L_n M_1\ldots M_n} = \ket{\sigma}_{L_1M_1}\otimes \ket{\sigma}_{L_2M_2}\ldots \ket{\sigma}_{L_nM_n}$ be a purification of  $\sigma_{M_1}\otimes \ldots \sigma_{M_n}$. Let 
 $$\ket{\xi}_{RABM L_1\ldots L_nM_1\ldots M_n} \defeq  \ket{\Phi}_{RABM} \otimes \ket{\theta}_{L_1\ldots L_n M_1\ldots M_n} .$$ 
Using convex-split lemma (Lemma ~\ref{convexcomb}) and choice of $n$ we have,
$$\F^2(\xi_{RBM_1\ldots M_n}, \mu_{RB M_1\ldots M_n})      \\ 
 \geq 1 - \varepsilon^2 .$$ 
Let $\ket{\xi'}_{RBJ L_1\ldots L_nM_1\ldots M_n}$ be a purification of $\xi_{RB M_1\ldots M_n}$ (guaranteed by Uhlmann's theorem, Fact~\ref{uhlmann}) such that,
$$\F^2(\ketbra{\xi'}_{RBJ L_1\ldots L_nM_1\ldots M_n},\ketbra{\mu}_{RBJ L_1\ldots L_nM_1\ldots M_n}) = \F^2(\xi_{RBM_1\ldots M_n}, \mu_{RB M_1\ldots M_n})  \geq 1- \varepsilon^2.$$ 
Let $V': \H_{AML_1\ldots L_n} \rightarrow \H_{J L_1\ldots L_n}$ be an isometry (guaranteed by Uhlmann's theorem, Fact~\ref{uhlmann})  such that,
$$V'\ket{\xi}_{RABM L_1\ldots L_nM_1\ldots M_n} = \ket{\xi'}_{RBJ L_1\ldots L_nM_1\ldots M_n} .$$ 
Consider the following protocol $\P$.
\begin{enumerate}
\item $\Alice$, $\Bob$ and $\Referee$ start by sharing the state $\ket{\xi}_{RABM L_1\ldots L_nM_1\ldots M_n} $ between themselves where $\Alice$ holds registers $AML_1\ldots L_n$, $\Referee$ holds the register $R$ and $\Bob$ holds  the registers  $BM_1\ldots M_n$. Note that  $\ket{\Psi}_{RABM}$ is provided as input to the protocol and $\ket{\theta}_{L_1\ldots L_n M_1\ldots M_n}$ is additional shared entanglement between $\Alice$ and $\Bob$. 
\item $\Alice$ applies isometry $V'$ to obtain state $\ket{\xi'}_{RBJL_1\ldots L_nM_1\ldots M_n}$, where $\Alice$ holds registers $JL_1\ldots L_n$, $\Referee$ holds the register $R$ and $\Bob$ holds  the registers  $BM_1\ldots M_n$.
\item $\Alice$ and $\Bob$ simulate protocol $\P_1$ from Step 2. onwards.
\end{enumerate}
Let $\Phi'_{RABM}$ be the output of protocol $\P$. Since quantum maps (the entire protocol $\P_1$ can be viewed as a quantum map from input to output) do not decrease fidelity (monotonicity of fidelity under quantum operation, Fact~\ref{fact:monotonequantumoperation}), we have,
\begin{equation}
\F^2(\Phi_{RABM}, \Phi'_{RABM}) \geq  \F^2(\ketbra{\xi'}_{RBJ L_1\ldots L_nM_1\ldots M_n},\ketbra{\mu}_{RBJ L_1\ldots L_nM_1\ldots M_n}) \geq 1- \varepsilon^2.
\label{eq:prot}
\end{equation}
This implies $\Phi_{RABM} \in \ball{\eps}{\ketbra{\Psi}_{RABC}}$.  

The number of qubits communicated by $\Alice$ to $\Bob$ in $\P$ is upper bounded by:
$$\frac{\log(n)}{2} \leq \frac{1}{2}\imax{RB}{M}_{\Phi} + \log\left(\frac{1}{\varepsilon}\right) .$$ 

\end{proof}

\section{Communication bounds on quantum state redistribution}

We begin with definition of quantum state redistribution. Please note that we allow $\Alice$ and $\Bob$ to share arbitrary prior entanglement. In comparison, the previous works \cite{Berta14,Oppenheim14} use EPR states and also take into account the amount of entanglement used by the protocol. 

\begin{definition}[Quantum state redistribution]
\label{def:qstateredistribution}
The quantum state $\ket{\Psi}_{RABC}\in\mathcal{D}(RABC)$ is shared between three parties $\Referee~(R), \Bob~(B)$ and $\Alice~(AC)$. In addition, $\Alice$ and $\Bob$ are allowed to share an arbitrary pure state $\ket{\theta}_{S^1_AS^1_B}$, where register $S^1_A$ belongs to $\Alice$ and register $S^1_B$ belongs to $\Bob$. Let $M$ represent the message register. $\Alice$ applies an encoding map $\E:\mathcal{L}(ACS^1_A)\rightarrow \mathcal{L}(AM)$ and sends the message $M$ to $\Bob$. $\Bob$ applies a decoding map $\mathcal{D}:\mathcal{L}(MBS^1_B)\rightarrow \mathcal{L}(BC)$. The resulting state $\Phi_{RABC}$ is the \textit{output} of the protocol. Quantum communication cost of the protocol is $\log|M|$.
\end{definition}

Using Stinespring representation (Fact \ref{stinespring}), the quantum maps $\E$ and $\mathcal{D}$ can be realized as unitary operations using additional ancillas. Let the ancillary register needed for map $\E$ by $\Alice$ be $S^2_A$, holding the state $\theta'_{S^2_A}$, and the ancillary register needed for map $\mathcal{D}$ by $\Bob$ be $S^2_B$, holding the state $\theta'_{S^2_B}$. Introduce registers $S_A\defeq S^1_AS^2_A$ and $S_B\defeq S^1_BS^2_B$. Let the joint state in registers $S_AS_B$ be $\ket{\theta}_{S_AS_B}$. Then following is equivalent to Definition \ref{def:qstateredistribution}. $\Alice$ applies a unitary $U_{ACS_A}$ on her registers, leading to the registers $AMT_A$ on her side (with $MT_A\equiv CS_A$). She sends $M$ to Bob, who applies a unitary $V_{MBS_B}$ and discards all his registers except $BC$. Let the registers discarded by $\Bob$ be $T_B$. The output of protocol is the state $\Phi_{RABC}$ in register $RABC$. Figure $3$ elaborates upon this description. 

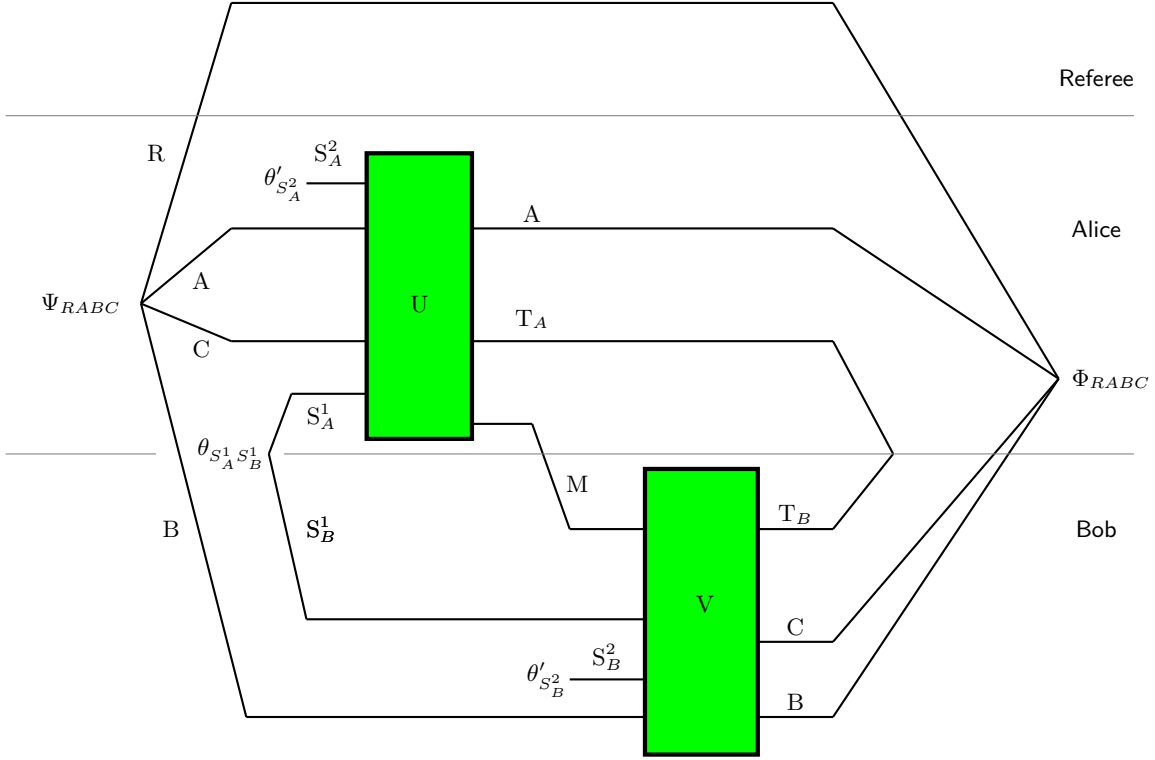
\begin{figure}
\centering
\begin{tikzpicture}
\node at (0,6) {$\Psi_{RABC}$};
\draw[thick] (0.8,6) -- (2,10);
\draw[thick] (0.8,6) -- (2,7);
\draw[thick] (0.8,6) -- (2,5.5);
\draw[thick] (0.8,6) -- (2.2,0.5);
\draw[thick] (2.2,0.5) -- (7.5,0.5);
\draw[thick] (2,5.5) -- (3.8,5.5);
\draw[thick] (2,7) -- (3.8,7);
\draw[thick] (2,10) -- (10,10);

\node at (2,4) {$\theta_{S^1_AS^1_B}$};
\draw[thick] (2.5,4) -- (2.8,4.8);
\draw[thick] (2.5,4) -- (3,1.8);
\draw[thick] (2.8,4.8) -- (3.8,4.8);
\draw[thick] (3,1.8) -- (7.5,1.8);

\draw[fill=green, ultra thick] (3.8,8) rectangle (5.2,4.2);
\node at (2.7,7.6) {$\theta'_{S^2_A}$};
\draw[thick] (3,7.6) -- (3.8,7.6);
\draw[thick] (5.2,4.4) -- (6,4.4);
\draw[thick] (6,4.4) -- (6.5,3);
\draw[thick] (6.5,3) -- (7.5,3);
\draw[thick] (5.2,5.5) -- (10,5.5);
\draw[thick] (10,5.5) -- (10.8,4);
\draw[thick] (5.2,7) -- (10,7);

\draw[fill=green, ultra thick] (7.5,3.8) rectangle (9,0);
\node at (6.2,1) {$\theta'_{S^2_B}$};
\draw[thick] (6.5,1) -- (7.5,1);
\draw[thick] (9,0.5) -- (10,0.5);
\draw[thick] (9,1.5) -- (10,1.5);
\draw[thick] (9,3) -- (10,3);
\draw[thick] (10,3) -- (10.8,4);

\node at (13.7,5) {$\Phi_{RABC}$};
\draw[thick] (10,7) -- (13,5);
\draw[thick] (10,0.5) -- (13,5);
\draw[thick] (10,1.5) -- (13,5);
\draw[thick] (10,10) -- (13,5);

\node at (1,8) {$\text{R}$};
\node at (1.6,6.3) {$\text{A}$};
\node at (1.6,5.4) {$\text{C}$};
\node at (1.2,3) {$\text{B}$};
\node at (3.2,4.5) {$\text{S}^1_A$};
\node at (3.2,3) {$\text{S}^1_B$};
\node at (3.3,8) {$\text{S}^2_A$};
\node at (7,1.3) {$\text{S}^2_B$};
\node at (6.6,3.6) {$\text{M}$};
\node at (3.2,3) {$\text{S}^1_B$};
\node at (6,5.8) {$\text{T}_A$};
\node at (6,7.2) {$\text{A}$};
\node at (9.5,0.7) {$\text{B}$};
\node at (9.5,1.7) {$\text{C}$};
\node at (9.5,3.2) {$\text{T}_B$};
\node at (4.5,6) {$\text{U}$};
\node at (8.3,2) {$\text{V}$};

\draw [gray] (-1,8.5) -- (14,8.5);
\draw [gray] (-1,4) -- (1,4);
\draw [gray] (2.7,4) -- (14,4);
\node at (13.5,9) {$\Referee$};
\node at (13.5,7) {$\Alice$};
\node at (13.5,3) {$\Bob$};
\end{tikzpicture}
\caption{Graphical representation of one-way entanglement assisted quantum state redistribution.}
\label{fig:stateredistpic}
\end{figure}

Before proceeding to our upper and lower bounds, we present the following definition. 
\begin{definition} 
\label{redistbound}
Let $\varepsilon \geq 0$ and $\Psi_{RABC}\in\mathcal{D}(RABC)$ be a pure state. Define,
\begin{eqnarray*}  
\mathrm{Q}^{\varepsilon}_{\ket{\Psi}_{RABC}} &\defeq& \inf_{T, U_{BCT},\sigma'_T,\kappa_{RBCT}} \imax{RB}{CT}_{\kappa_{RBCT}} \\
& =& \inf_{T, U_{BCT},\sigma'_T,\sigma_{CT},\kappa_{RBCT}} \dmax{\kappa_{RBCT}}{\kappa_{RB}\otimes\sigma_{CT}} 
\end{eqnarray*}
with the conditions $U_{BCT}\in \mathcal{U}(BCT),  \sigma'_T \in \mathcal{D}(T), \sigma_{CT} \in \mathcal{D}(CT)$ and
$$(I_R \otimes U_{BCT}) \kappa_{RBCT}(I_R \otimes U^{\dagger}_{BCT}) \in \ball{\eps}{\Psi_{RBC}\otimes\sigma'_T}, \kappa_{RB}=\Psi_{RB}.$$ 
\end{definition}

\subsection*{Lower bound}
\label{sec:lower}
We have the following lower bound result.
\begin{theorem}[Lower bound]
\label{lowerbound}
Let $\varepsilon > 0$ and $\Psi_{RABC}\in\mathcal{D}(RABC)$ be a pure state.  Let $\Q$ be an entanglement-assisted one-way protocol (with communication from $\Alice$ to $\Bob$), which takes as input $\ket{\Psi}_{RABC}$ shared between three parties $\Referee~(R), \Bob~(B)$ and $\Alice~(AC)$ and outputs a state $\Phi_{RABC}$ shared between $\Referee~(R), \Bob~(BC)$ and $\Alice~(A)$ such that $\Phi_{RABC} \in  \ball{\eps}{\Psi_{RABC}}$. The number of qubits communicated by $\Alice$ to $\Bob$ in $\Q$ is lower bounded by:
$$\frac{1}{2}\mathrm{Q}^{\varepsilon}_{\ket{\Psi}_{RABC}}.$$ 
\end{theorem}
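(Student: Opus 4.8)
The plan is to extract from an arbitrary one-way entanglement-assisted redistribution protocol $\Q$ a feasible point for the optimization defining $\mathrm{Q}^{\varepsilon}_{\ket{\Psi}_{RABC}}$, whose value is at most twice the number of communicated qubits. First I would write down the general form of such a protocol: $\Alice$ holds $AC$ and an entanglement register $E_A$, $\Bob$ holds $B$ and $E_B$, where $\ket{\mathrm{ent}}_{E_AE_B}$ is the shared entangled state; $\Alice$ applies an isometry $V_1$ taking her registers to a message register $M$ (of dimension $2^q$, where $q$ is the qubit count) together with a residual register; $M$ is sent to $\Bob$; $\Bob$ applies an isometry $V_2$ to $M$ together with his registers, producing the output with $\Bob$ now holding $BC$ and some leftover register. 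Purifying everything and tracking the joint pure state $\ket{\Phi}$ on $RABC$ together with auxiliary registers, the correctness guarantee is $\Phi_{RABC}\in\ball{\eps}{\Psi_{RABC}}$.

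The key observation is that just before $\Alice$ sends $M$, the global state is pure and $\Referee$'s register $R$ is untouched, as is $\Bob$'s $BE_B$. At that moment, consider the register grouping: everything on $\Bob$'s side together with $\Referee$ is $RB\tilde B$ for some purifying register $\tilde B$, and the message $M$ plus the rest of $\Alice$'s side. Because the only thing crossing the cut is $M$ of dimension $2^q$, the state on $RB(M\oplus\text{Bob's side})$ has the property that conditioning / the max-information across $M$ is controlled: concretely, $\imax{RB}{M\,\cdot}$ is at most $2q = 2\log|M|$ by Fact~\ref{dmaxuppbound}. I would then set $T$ to be $\tilde B$ (the purification of $\Bob$'s pre-message state beyond $B$) adjoined appropriately, let $\kappa_{RBCT}$ be obtained by letting $\Bob$ apply $V_2$ — i.e.\ reconstruct the output registers $BC$ from $M$ and $E_B$ — and take $U_{BCT}$ to be the isometry/unitary that $\Bob$'s decoding induces, so that applying $U_{BCT}$ to $\kappa_{RBCT}$ yields a state $\eps$-close to $\Psi_{RBC}\otimes\sigma'_T$ (the residual $\Bob$-side state, which is a product with the output by purity of $\ket{\Phi}$ up to the $\eps$ error — here one invokes Uhlmann/Lemma~\ref{lem:pureclose} to legitimately factor it). The condition $\kappa_{RB}=\Psi_{RB}$ holds because $\Bob$'s and $\Referee$'s marginals on $RB$ are never disturbed before the message arrives, and $\Psi_{RB}$ is the input marginal.

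With such a $(T,U_{BCT},\sigma'_T,\kappa_{RBCT})$ in hand, it is a feasible tuple in Definition~\ref{redistbound}, so $\mathrm{Q}^{\varepsilon}_{\ket{\Psi}_{RABC}}\le \imax{RB}{CT}_{\kappa_{RBCT}}$. The final inequality $\imax{RB}{CT}_{\kappa_{RBCT}}\le 2q$ follows because $CT$ is, up to a local isometry on $\Bob$'s side (his decoding), the same as $M E_B$, and $\imax{RB}{ME_B}\le \imax{RB}{M} + (\text{something})$ — more cleanly, one notes $\imax{}{}$ is invariant under local isometries applied to the second system and that $RB$ is in tensor product with $\Bob$'s pre-message registers $E_B$ (and $\tilde B$), so $\imax{RB}{CT} = \imax{RB}{M}\le 2\log|M| = 2q$ via Fact~\ref{dmaxuppbound}. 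Dividing by $2$ gives $q\ge \frac12\mathrm{Q}^{\varepsilon}_{\ket{\Psi}_{RABC}}$.

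I expect the main obstacle to be the bookkeeping in the second step: precisely identifying which purifying registers go into $T$, and justifying that the decoded state $\kappa_{RBCT}$ really has $\kappa_{RB}=\Psi_{RB}$ exactly (not just approximately) while simultaneously having $U_{BCT}\kappa_{RBCT}U_{BCT}^\dagger$ within $\eps$ of a product $\Psi_{RBC}\otimes\sigma'_T$. The cleanest route is to define $\kappa_{RBCT}$ \emph{before} any error is incurred — as the state $\Bob$ would hold if he decoded from the pre-message global pure state — so that $\kappa_{RB}=\Psi_{RB}$ is exact by the no-disturbance argument, and then push the $\eps$ from the protocol's correctness into the closeness of $U_{BCT}\kappa_{RBCT}U_{BCT}^\dagger$ to $\Psi_{RBC}\otimes\sigma'_T$ using monotonicity of fidelity under tracing out $\Alice$'s side and Lemma~\ref{lem:pureclose} to supply the product factor $\sigma'_T$.
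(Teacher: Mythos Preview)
Your approach is the same as the paper's and is correct in outline, but two points of your exposition are muddled and would need tightening. First, the definition of $\kappa_{RBCT}$: you say it is ``obtained by letting $\Bob$ apply $V_2$'' but then take $U_{BCT}$ to be $\Bob$'s decoding, which would mean decoding twice. The paper (and what you really want) is to let $\kappa_{RBCT}$ be the state \emph{before} $\Bob$'s unitary, with the register identification $CT\equiv E_B M$; then $U_{BCT}\defeq V_2$ is $\Bob$'s decoding, and $\kappa_{RB}=\Psi_{RB}$ holds exactly because neither $R$ nor $B$ has been touched. Your final paragraph in fact says this correctly, so just make sure the earlier description matches it.

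Second, the claim $\imax{RB}{CT}=\imax{RB}{M}$ is not right (by Fact~\ref{imaxmonotone} adding $E_B$ to the second argument can only increase $\imax{}{}$), and ``invariance under local isometries'' is not the mechanism here since $E_B$ is not produced from $M$ by an isometry. The correct step is the one the paper uses: apply Fact~\ref{dmaxuppbound} with the grouping $RBE_B$ versus $M$ to get $\omega_M$ with $\dmax{\kappa_{RBE_BM}}{\kappa_{RBE_B}\otimes\omega_M}\le 2\log|M|$, and then use the exact product structure $\kappa_{RBE_B}=\Psi_{RB}\otimes\theta_{E_B}$ (which you do mention) to rewrite the second argument as $\Psi_{RB}\otimes\sigma_{CT}$ with $\sigma_{CT}\defeq\theta_{E_B}\otimes\omega_M$. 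This directly exhibits the feasible $\sigma_{CT}$ in Definition~\ref{redistbound} with value at most $2q$, and the use of Lemma~\ref{lem:pureclose} to supply $\sigma'_T$ is exactly as you describe.
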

\begin{proof}
Protocol $\Q$ can be written as follows (see Figure \ref{fig:stateredistpic} ):
\begin{enumerate}
\item $\Alice$ and $\Bob$ get as input $\ket{\Psi}_{RABC}$ shared between $\Alice$ (AC), $\Referee$ (R) and $\Bob$ (B). In addition $\Alice$ and $\Bob$ use shared entanglement and local ancillas for their protocol. Let these additional resources be represented by a pure state $\ket{\theta}_{S_AS_B}$ where register $S_A$ is held by $\Alice$ and register $S_B$ is held by $\Bob$. 
\item  $\Alice$ applies a unitary $U_{ACS_A}$ on the registers $ACS_A$.  Let $\kappa_{RMAT_ABS_B}$ be the joint state at this stage shared between $\Alice$ ($MAT_A$), $\Referee$ ($R$) and $\Bob$ ($BS_B$),  where  $MT_A  \equiv CS_A$. Note that $\kappa_{RB}=\Psi_{RB}$ and $\kappa_{RBS_B}=\Psi_{RB}\otimes \theta_{S_B}$. 
\item $\Alice$ sends the  message register $M$ to $\Bob$. 
\item $\Bob$ applies a unitary $V_{BS_BM}$ on the registers $BS_BM$. Let $\Phi_{RABCT_AT_B}$ be the joint state at this stage shared between $\Alice$ ($AT_A$), $\Referee$ (R) and $\Bob$ ($BCT_B$) where $S_BM \equiv CT_B$.
\item The state $\Phi_{RABC}$ is considered the output of the protocol $\Q$.
\end{enumerate}
Using Fact~\ref{dmaxuppbound}, we know that there exists a state $\omega_M$, such that:
\begin{equation} \label{eq:low-1}
2 \log |M|  \geq \dmax{\kappa_{RBS_BM}}{\kappa_{RBS_B}\otimes\omega_M}  =  \dmax{\kappa_{RBS_BM}}{\Psi_{RB} \otimes \theta_{S_B}\otimes\omega_M} .
\end{equation}
We have $\F^2(\Phi_{RABC},\ketbra{\Psi}_{RABC}) \geq 1 - \varepsilon^2 $ and $\ketbra{\Psi}_{RABC}$ is a pure state. From Lemma~\ref{lem:pureclose} and monotonicity of fidelity under quantum operation (Fact~\ref{fact:monotonequantumoperation}) we get a  state $\sigma'_{T_B}$ such that,
 $$\F^2(\Phi_{RBCT_B}, \Psi_{RBC}\otimes\sigma'_{T_B}) \geq 1 -  \varepsilon^2 .$$ 
We have,
\begin{equation} \label{eq:low-2}
\Phi_{RBCT_B} =  (I_R \otimes V_{BS_BM}) \kappa_{RBS_BM} (I_R \otimes V^{\dagger}_{BS_BM}),~\kappa_{RB} = \Psi_{RB} .
\end{equation}
Recall that $S_BM \equiv CT_B$.  Define $\sigma_{CT_B} \defeq \theta_{S_B}\otimes\omega_M$. Eq.~\eqref{eq:low-1} and Eq.~\eqref{eq:low-2} imply, 
$$ 2 \log |M|  \geq  \dmax{\kappa_{RBCT_B}}{\Psi_{RB} \otimes \sigma_{CT_B}} ,$$ 
with the conditions
\begin{equation*} 
\F^2(\Phi_{RBCT_B},\Psi_{RBC}\otimes\sigma'_{T_B})>1-\varepsilon^2, \Phi_{RBCT_B}= (I_R \otimes V_{BCT_B}) \kappa_{RBCT_B}(I_R \otimes V^{\dagger}_{BCT_B}),\kappa_{RB} = \Psi_{RB} .
\end{equation*}
From above and the definition of $Q^{ \varepsilon}_{\ket{\Psi}_{RABC}}$ we conclude
$$ \log |M|\geq \frac{1}{2}Q^{ \varepsilon}_{\ket{\Psi}_{RABC}} .$$ 
\end{proof}

\subsection*{Upper bound}
\label{sec:upper}

We show a nearly matching upper bound on the quantum communication cost of quantum state redistribution.

\begin{theorem}[Upper bound] \label{thm:main}
Let $\varepsilon \in (0,1/3)$ and $\Psi_{RABC}\in\mathcal{D}(RABC)$ be a pure state. There exists an entanglement-assisted one-way protocol $\P$, which takes as input $\ket{\Psi}_{RABC}$ shared between three parties $\Referee~(R), \Bob~(B)$ and $\Alice~(AC)$ and outputs a state $\Phi_{RABC}$ shared between $\Referee~(R), \Bob~(BC)$ and $\Alice~(A)$ such that $\Phi_{RABC} \in  \ball{2\eps}{\Psi_{RABC}}$. The number of qubits communicated by $\Alice$ to $\Bob$ in $\P$ is upper bounded by:
$$\frac{1}{2}\mathrm{Q}^{\varepsilon}_{\ket{\Psi}_{RABC}} + \log\left(\frac{2}{\varepsilon}\right) .$$ 
\end{theorem}

\begin{proof}
The definition of $\mathrm{Q}^{\varepsilon}_{\ket{\Psi}_{RABC}}$ involves an infimum over various quantities. There exists a collection $(T, U_{BCT}, \sigma'_T,\sigma_{CT}, \kappa_{RBCT})$ along with the conditions,
$$(I_R \otimes U_{BCT}) \kappa_{RBCT}(I_R \otimes U^{\dagger}_{BCT}) \in \ball{\eps}{\Psi_{RBC}\otimes\sigma'_T}, \kappa_{RB} = \Psi_{RB},$$ such that $\imax{RB}{CT}_{\kappa} \leq \Q^{\eps}_{\Psi_{RABC}} + 1$. 

Define the state $$\rho_{RBCT} \defeq (I_R \otimes U_{BCT}) \kappa_{RBCT}(I_R \otimes U^{\dagger}_{BCT}).$$ 

Since $\kappa_{RB} = \Psi_{RB}$, then for any purification $\ket{\kappa}_{RBCTS}$ of $\kappa_{RBCT}$, there exists an isometry $V_1: \H_{AC}\rightarrow \H_{CTS}$ such that 
\begin{equation}
\label{eq:nearbygood}
\ketbra{\kappa}_{RBCTS} = V_1\Psi_{RBAC}V_1^{\dagger}
\end{equation}
We start with the following protocol $\P_1$.
\begin{enumerate}
\item $\Alice (CTS)$, $\Bob(B)$ and $\Referee(R)$ start with the state $\ket{\kappa}_{RBCTS}$ and shared entanglement as required in the protocol described in Theorem \ref{qmescompress}.
\item Using the protocol described in Theorem \ref{qmescompress}, the parties produce a state $\kappa'_{RBCTS}$ with registers $BCT$ belonging to $\Bob$, $S$ belonging to $\Alice$ and $R$ belonging to $\Referee$, such that $\F^2(\kappa'_{RBCTS}, \kappa_{RBCTS})\geq 1-\eps^2$. In other words,
\begin{equation}
\label{eq:kappafid}
\P(\kappa'_{RBCTS}, \kappa_{RBCTS}) \leq \eps
\end{equation}

\item Bob applies the unitary $U_{BCT}$ on registers $BCT$.   
\end{enumerate}

The number of qubits communicated in $\P_1$ is $\frac{1}{2}\imax{RB}{CT}_{\kappa} + \log(\frac{1}{\eps})$. 

At the end of the protocol, the state in registers $RBCT$ is $U_{BCT}\kappa'_{RBCT}U^{\dagger}_{BCT}$. By definition of $\rho_{RBCT}$, the relation $\P(\rho_{RBCT}, \Psi_{RBC}\otimes \sigma'_T) \leq \eps$ and Equation \ref{eq:kappafid}, we find (using triangle inequality for purified distance (Fact~\ref{fact:trianglepurified})) that 
$$\P(\Psi_{RBC}\otimes \sigma'_T, U_{BCT}\kappa'_{RBCT}U^{\dagger}_{BCT})\leq 2\eps.$$

Thus, there exists an isometry $V_2: \H_{S}\rightarrow \H_{AE}$ such that for a purification $\ket{\sigma'}_{ET}$ of $\sigma_T$, 
\begin{equation}
\label{eq:finalfid}
\P(\Psi_{RABC}\otimes \ketbra{\sigma'}_{ET}, V_2\otimes U_{BCT}\kappa'_{RBCTS}U^{\dagger}_{BCT}\otimes V_2^{\dagger})\leq 2\eps
 \end{equation}

Now, we describe the protocol $\P$ that achieves the desired task.
\begin{enumerate}
\item $\Alice (AC)$, $\Bob(B)$ and $\Referee(R)$ start with the state $\ket{\Psi}_{RABC}$ and the shared entanglement as required
to run the protocol $\P_1$ below.
\item $\Alice$ applies the isometry $V_1$ on her registers. The parties run the protocol $\P_1$. Finally, $\Alice$ applies the isometry $V_2$ on her registers. 
\end{enumerate}

Let the final state produced in registers $RABC$ be $\Phi_{RABC}$. Using equations \ref{eq:finalfid} and \ref{eq:nearbygood}, we find that
$\P(\Psi_{RABC},\Phi_{RABC})\leq 2\eps$.   

Since the quantum communication cost of $\P$ is equal to the quantum communication cost of $\P_1$, the number of qubits communicated by $\Alice$ to $\Bob$ in $\P$ is upper bounded by:
$$\frac{\log(n)}{2} \leq \frac{1}{2}\mathrm{Q}^{\varepsilon}_{\ket{\Psi}_{RABC}} + \frac{1}{2}+ \log\left(\frac{1}{\varepsilon}\right) \leq \frac{1}{2}\mathrm{Q}^{\varepsilon}_{\ket{\Psi}_{RABC}} + \log\left(\frac{2}{\varepsilon}\right).$$ 

\end{proof}

\section{Communication bounds on quantum state splitting and quantum state merging}
\label{sec:splitmerge}
In this section, we describe near optimal bound for quantum communication cost of quantum state splitting and quantum state merging protocols. We recall that quantum state splitting is a special case of quantum state redistribution in which the register $B$ is trivial and quantum state merging is a special case of quantum state redistribution in which register $A$ is trivial.
\subsection*{Quantum state splitting}
We show the following lemma, which along with our upper bound (Theorem~\ref{thm:main}) and lower bound (Theorem~\ref{lowerbound}) immediately gives the desired upper and lower bound on quantum communication cost of quantum state splitting.
\begin{lemma} \label{lem:splitting} Let $\Psi_{RABC}\in \mathcal{D}(RABC)$ be a pure quantum state and let $B$ be a trivial register, that is, $|B|=1$. Then $\mathrm{Q}^{\varepsilon}_{\ket{\Psi}_{RABC}} = \imaxeps{R}{C}_{\Psi_{RC}}$.
\end{lemma}
\begin{proof} Since register $B$ is trivial, we drop the notation $B$ from the quantum states discussed below. Given the quantum state $\kappa_{RCT}$ as appearing in definition of $Q^{\eps}_{\ket{\Psi}_{RAC}}$ (Definition \ref{redistbound}), we define the state
$$ \rho_{RCT} \defeq (I_R \otimes U_{CT}) \kappa_{RCT} (I_R \otimes U^{\dagger}_{CT}).$$ 
It holds that $\rho_{RCT}\in \ball{\eps}{\Psi_{RC}\otimes\sigma'_T}$. Note that the condition $\kappa_{R}\in \ball{\eps}{\Psi_{R}}$ is now redundant (is implied by above using $\rho_R = \kappa_R$ and monotonicity of fidelity under quantum operation, Fact~\ref{fact:monotonequantumoperation}). Consider,
\begin{eqnarray*}
\mathrm{Q}^{\varepsilon}_{\ket{\Psi}_{RAC}} &=& \inf_{T, U_{CT},\sigma_{CT},\sigma'_T,\kappa_{RCT}} \dmax{\kappa_{RCT}}{\kappa_R\otimes \sigma_{CT}} \\ &=& \inf_{T, U_{CT},\sigma_{CT},\sigma'_T,\kappa_{RCT}} \dmax{(I_R\otimes U^{\dagger}_{CT})\rho_{RCT}(I_R\otimes U_{CT})}{\kappa_R\otimes \sigma_{CT}} \\ &=& \inf_{T, U_{CT},\sigma_{CT},\sigma'_T,\kappa_{RCT}} \dmax{\rho_{RCT}}{\kappa_R\otimes U_{CT}\sigma_{CT}U_{CT}^{\dagger}} \\ &=& \inf_{T, \mu_{CT},\sigma'_T,\kappa_{RCT}} \dmax{\rho_{RCT}}{\kappa_R\otimes \mu_{CT}} \quad (\text{with } \mu_{CT}\defeq U_{CT}\sigma_{CT}U_{CT}^{\dagger})\\
&=& \inf_{T,\sigma'_T, \rho_{RCT}\in \ball{\eps}{\Psi_{RC}\otimes\sigma'_T}} \imax{R}{CT}_{\rho_{RCT}} \quad (\text{using } \rho_R = \kappa_R)\\
&=& \inf_{T,\sigma'_T} \imaxeps{R}{CT}_{\Psi_{RC}\otimes \sigma'_T} .
\end{eqnarray*}
Now,
\begin{eqnarray*}
\imaxeps{R}{C}_{\Psi_{RC}} &\geq& \inf_{T,\sigma'_T} \imaxeps{R}{CT}_{\Psi_{RC}\otimes \sigma'_T} \quad (\mbox{by setting $T$ to be trivial register}) \\
&=&\inf_{T,\sigma'_T, \rho_{RCT}\in \ball{\eps}{\Psi_{RC}\otimes\sigma'_T}} \imax{R}{CT}_{\rho_{RCT}}  \\
&\geq& \inf_{\rho_{RC}\in \ball{\eps}{\Psi_{RC}}} \imax{R}{C}_{\rho_{RC}} \\&& (\text{using monotonicity of max-information under quantum operation, Fact}~\ref{imaxmonotone}) \\
&=& \imaxeps{R}{C}_{\Psi_{RC}}.
\end{eqnarray*} 
Therefore,
$$ \mathrm{Q}^{\varepsilon}_{\ket{\Psi}_{RAC}} =  \inf_{T,\sigma'_T} \imaxeps{R}{CT}_{\Psi_{RC}\otimes \sigma'_T} = \imaxeps{R}{C}_{\Psi_{RC}}  .$$
\end{proof}

\subsection*{Quantum state merging}
Now, we consider the case of quantum state merging. It has been noted in \cite{Renner11} that quantum state merging can be viewed as `time reversed' version of quantum state splitting, and their optimal quantum communication cost is the same. 
\begin{lemma}[\cite{Renner11}] \label{lem:splitsameasmerge}
Let $\varepsilon > 0$ be error parameter. Following two statements are equivalent, with registers $A$ and $B$ such that $A\equiv B$.
\begin{enumerate}
\item There exists an entanglement assisted quantum state splitting protocol $\P$ with quantum communication cost $c$, that starts with a state $\Psi_{RAC} \in\mathcal{D}(RAC)$, with $AC$ on $\Alice$'s side and $R$ on $\Referee$'s side, and outputs a state $\Phi_{RAC}$, with $C$ on $\Bob$'s side, such that $\Phi_{RAC}\in \ball{\eps}{\Psi_{RAC}}$.

\item There exists an entanglement assisted quantum state merging protocol $\Q$ with quantum communication cost $c$, that starts with the state $\Psi_{RBC} \in \mathcal{D}(RBC)$, with $C$ on $\Alice$'s side and $B$ on $\Bob$'s side, and outputs  a state $\Phi'_{RBC}$, with $(BC)$ on $\Bob$'s side, such that $\Phi'_{RBC}\in \ball{\eps}{\Psi_{RBC}}$. 
\end{enumerate}
\end{lemma}

\begin{proof}
We show that $(1)\implies (2)$. Let the protocol $\P$ start with the overall pure state $\Psi_{RAC}\otimes\mu_{E}$, where the register $E$ include shared entanglement and other ancilla registers used by $\P$. Let the final pure state of the protocol be $\Phi_{RACE}$, with $\F^2(\Phi_{RAC},\Psi_{RAC})\geq 1-\varepsilon^2$. To describe the quantum state merging protocol, we now relabel register $A$ with register $B$. Since  protocol $\P$ is a collection of unitary operations (which are invertible, see discussion after Definition \ref{def:qstateredistribution}),  it implies that there exists a protocol $\P'$ (which is inverse of the protocol $\P$) that starts with the state $\Phi_{RBCE}$, and leads to the state $\Psi_{RBC}\otimes \mu_{E}$ with $\F^2(\Psi_{RBC},\Phi_{RBC})\geq 1-\varepsilon^2$. From Uhlmann's theorem (Fact~\ref{uhlmann}), there exists a pure state $\mu'_{E}$ that satisfies $$\F^2(\Psi_{RBC}\otimes\mu'_{E},\Phi_{RBCE}) = \F^2(\Psi_{RBC},\Phi_{RBC})\geq 1-\varepsilon^2.$$ 
Let $\Q$ be a protocol that starts with the pure state $\Psi_{RBC}\otimes\mu'_{E}$, and then follows the protocol $\P'$. Let the overall state at the end of $\Q$ be $\Phi'_{RBCE}$.  Then,
$$\F^2(\Psi_{RBC},\Phi'_{RBC})\geq \F^2(\Psi_{RBC}\otimes\mu_{E},\Phi'_{RBCE})=\F^2(\Phi_{RBCE}, \Psi_{RBC}\otimes \mu'_{E})) \geq 1-\varepsilon^2.$$ 
It is clear that the communication between $\Alice$ and $\Bob$ is the same in $\P$ and $\Q$.

$(2)\implies (1)$ can be proved using similar arguments. 
\end{proof}

\section{Port-based teleportation}
We consider the problem of port-based teleportation, when the sender $\Alice$ and the receiver $\Bob$ know that the set of possible states to be teleported belong to the ensemble $\{p_i,\ketbra{\psi}^i\}_i$, with $\sum_{i} p_i=1$. $\Alice$ is given the state $\ketbra{\psi^i}$ with probability $p_i$ which she wishes to teleport to $\Bob$. 

Before proving our result, we will prove the following useful Lemma. It can be seen as a one-sided analogue of the relation between optimal fidelity of teleportation and maximal singlet fraction as proven in \cite{horodecki99}. 
\begin{lemma}
\label{lem:entfidaveragefid}
Given a quantum channel $\E:M\rightarrow M$ with Kraus-representation $\E(\rho) = \sum_k A_k \rho A_k^{\dagger}$ and an ensemble $\{p_i,\ketbra{\psi}_M^i\}_i$ with $\psi^i_M\in \mathcal{D}(M)$, define the state 
$\ket{\Psi}_{RM} \defeq \sum_i \sqrt{p_i}\ket{i}_R\ket{\psi}^i_M$. Then it holds that 
$$\bra{\Psi}_{RM} \E(\Psi_{RM})\ket{\Psi}_{RM} \leq \sum_i p_i \bra{\psi}^i_M\E(\psi^i_M)\ket{\psi}^i_M.$$ 
\end{lemma}

\begin{proof}
We proceed as follows.
\begin{eqnarray*}
\bra{\Psi}_{RM} \E(\Psi_{RM})\ket{\Psi}_{RM} & = & \sum_k |\bra{\Psi}_{RM}\mathrm{I}_R\otimes A_k\ket{\Psi}_{RM}|^2 
 =  \sum_k | \sum_i p_i \Tr(\psi^i_M A_k) |^2 \\ & \leq & \sum_k (\sum_i p_i)\cdot (\sum_i p_i |\Tr(\psi^i_M A_k)|^2) = \sum_i p_i \sum_k |\Tr(\psi^i_M A_k)|^2
\end{eqnarray*}

The inequality above is due to the Cauchy-Schwartz inequality. Now, we observe that 
$$\sum_i p_i \bra{\psi}^i_M\E(\psi^i_M)\ket{\psi}^i_M =  \sum_i p_i \sum_k |\Tr(\psi^i_M A_k)|^2,$$
which completes the proof.
\end{proof}

Now we proceed to our main theorem of this section.

\begin{theorem}
\label{portbased}
Consider an ensemble of pure quantum states $\{p_i,\ketbra{\psi}^i_M\}_{i}$, with $\psi^i_M\in \mathcal{D}(M)$. Introduce a register $R$ and define the state $\ket{\Psi}_{RM} \defeq \sum_i \sqrt{p_i}\ket{i}_R\ket{\psi}^i_M$.  Let $\sigma_M$ be an arbitrary state and $k \defeq \dmax{\Psi_{RM}}{\Psi_R\otimes \sigma_M}$. Suppose $\Alice$ and $\Bob$ share $n$ copies of a purification of $\sigma_M$. Then there exists a port-based teleportation protocol such that $\Bob$ outputs the register $M'\equiv M$ and for each $i$, the final state with $\Bob$ is $\phi^i_{M'}$ such that $\sum_i p_i \F^2(\psi^i_{M'},\phi^i_{M'})\geq 1-\frac{2^k}{n}$.  
\end{theorem}

\begin{proof}

We define the state $$\tau_{RM_1M_2\ldots M_n} \defeq \frac{1}{n}\sum_j \Psi_{RM_j}\otimes\sigma_{M_1}\otimes\ldots \sigma_{M_{j-1}}\otimes \sigma_{M_{j+1}}\ldots\otimes \sigma_{M_n}.$$ 

Consider the following purification of $\tau^i_{RM_1M_2\ldots M_n}$, $$\ket{\tau^i}_{JL_1L_2\ldots L_n R M_1M_2\ldots M_n}\defeq \frac{1}{\sqrt{n}}\sum_j \ket{j}_J\ket{\Psi}_{RM_j}\ket{\sigma}_{L_1M_1}\otimes\ldots\ket{\sigma}_{L_{j-1}M_{j-1}}\otimes \ket{0}_{L_j}\otimes\ket{\sigma}_{L_{j+1}M_{j+1}}\ldots\otimes\ket{\sigma}_{L_nM_n},$$
where $\ket{\sigma}_{L_iM_i}$ is a purification of $\sigma_{M_i}$ and $\ket{0}_{L_j}$ is some fixed state.  

From convex split lemma \ref{convexcomb}, it holds that 
$$\F^2(\tau_{RM_1M_2\ldots M_n}, \Psi_R\otimes\sigma_{M_1}\otimes \sigma_{M_2}\ldots \otimes \sigma_{M_n})\geq \frac{1}{1+\frac{2^{k}}{n}}.$$

Thus, there exists an isometry $V: \H_{ML_1L_2\ldots L_n}\rightarrow \H_{JL_1L_2\ldots L_n}$ (guaranteed by Uhlmann's theorem, Fact~\ref{uhlmann}), such that 
\begin{equation}
\label{eq:portbased}
\F^2(\ketbra{\tau}_{JL_1L_2\ldots L_n RM_1M_2\ldots M_n}, V\ketbra{\Psi}_{RM}\otimes\ketbra{\sigma}_{L_1M_1}\otimes \ketbra{\sigma}_{L_2M_2}\ldots \otimes \ketbra{\sigma}_{L_nM_n}V^{\dagger})\geq \frac{1}{1+\frac{2^{k}}{n}}.
\end{equation}

We consider the following protocol $\P$:
\begin{enumerate}
\item $\Alice$ and $\Bob$ share $n$ copies of the state $\ket{\sigma}_{LM}$ in registers $L_1M_1,L_2M_2,\ldots L_nM_n$. 
\item $\Alice$ applies the isometry $V$ and measures the register $J$. Then she sends the outcome $j$ to $\Bob$.
\item Upon receiving the outcome $j$, $\Bob$ picks up the register $M_j$ and swaps it with his output register $M'$. 
\end{enumerate}

Consider the action of $\P$ when the input to it is the state $\Psi_{RM}$. Let the state in the registers $RM'$ upon the completion of $\P$ be $\P(\Psi_{RM})$. From Equation \ref{eq:portbased} and monotonicity of fidelity under quantum map (Fact \ref{fact:monotonequantumoperation}), it holds that $\F^2(\P(\Psi_{RM}),\Psi_{RM})\geq \frac{1}{1+\frac{2^{k}}{n}} \geq 1-\frac{2^k}{n}$.

Since $\P$ is a quantum map, we can apply Lemma \ref{lem:entfidaveragefid} to conclude that 
$$\sum_i p_i \F^2(\phi^i_{M'},\psi^i_{M'})=\sum_i p_i \F^2(\P(\psi^i_M),\psi^i_M) \geq \F^2(\P(\Psi_{RM}),\Psi_{RM}) \geq 1-\frac{2^k}{n}.$$

This proves the theorem.

\end{proof}

\bibliographystyle{alpha}
\bibliography{state-redistribution}

\newcommand{\etalchar}[1]{$^{#1}$}
\begin{thebibliography}{ADHW09}

\bibitem[ADHW09]{AbeyesingheDHW09}
Anura Abeyesinghe, Igor Devetak, Patrick Hayden, and Andreas Winter.
\newblock The mother of all protocols: {R}estructuring quantum information's
  family tree.
\newblock {\em Proceedings of the Royal Society of London}, A(465):2537--2563,
  2009.

\bibitem[BCF{\etalchar{+}}96]{barnum96}
Howard Barnum, Carlton~M. Cave, Christopher~A. Fuch, Richard Jozsa, and
  Benjamin Schmacher.
\newblock Noncommuting mixed states cannot be broadcast.
\newblock {\em Phys. Rev. Lett.}, 76(15):2818--2821, 1996.

\bibitem[BCR11]{Renner11}
Mario Berta, Matthias Christandl, and Renato Renner.
\newblock The {Q}uantum {R}everse {S}hannon {T}heorem based on one-shot
  information theory.
\newblock {\em Commun. Math. Phys.}, 306(3):579--615, 2011.

\bibitem[BCT16]{Berta14}
M.~Berta, M.~Christandl, and D.~Touchette.
\newblock Smooth entropy bounds on one-shot quantum state redistribution.
\newblock {\em IEEE Transactions on Information Theory}, 62(3):1425--1439,
  March 2016.

\bibitem[BD10]{BuscemiD10}
Francesco Buscemi and Nilanjana Datta.
\newblock The quantum capacity of channels with arbitrarily correlated noise.
\newblock {\em IEEE Transactions on Information Theory}, 56:1447--1460, 2010.

\bibitem[Ber09]{Berta09}
Mario Berta.
\newblock Single-shot quantum state merging.
\newblock Master's thesis, ETH Zurich, http://arxiv.org/abs/0912.4495, 2009.

\bibitem[BW92]{bennett92}
Charles~H. Bennett and Stephen~J. Wiesner.
\newblock Communication via one- and two-particle operators on
  einstein-podolsky-rosen states.
\newblock {\em Phys. Rev. Lett.}, 69(20):2881--2884, 1992.

\bibitem[Car10]{carlen}
Eric Carlen.
\newblock Trace inequalities and quantum entropy: an introductory course.
  entropy and the quantum.
\newblock {\em Contemp. Math.}, 529:73--140, 2010.

\bibitem[CHP07]{petz07}
I.~Csiszár, F.~Hiai, and D.~Petz.
\newblock Limit relation for quantum entropy and channel capacity per unit
  cost.
\newblock {\em J. Math. Phys.}, 48(092102), 2007.

\bibitem[CT91]{CoverT91}
Thomas~M. Cover and Joy~A. Thomas.
\newblock {\em Elements of information theory}.
\newblock Wiley Series in Telecommunications. John Wiley \& Sons, New York, NY,
  USA, 1991.

\bibitem[Dat09]{Datta09}
Nilanjana Datta.
\newblock Min- and max- relative entropies and a new entanglement monotone.
\newblock {\em IEEE Transactions on Information Theory}, 55:2816--2826, 2009.

\bibitem[DBWR14]{DupuisBWR14}
Fr{\'e}d{\'e}ric Dupuis, Mario Berta, J{\"u}rg Wullschleger, and Renato Renner.
\newblock One-shot decoupling.
\newblock {\em Communications in Mathematical Physics}, 328(251), 2014.

\bibitem[DCHR78]{CastelleHR78}
D.~Dacunha-Castelle, H.~Heyer, and B.~Roynette.
\newblock Ecole d'{E}t{\'e} de {P}robabilit{\'e}s de {S}aint-{F}lour {VII}.
\newblock {\em Lecture Notes in Mathematics, Springer-Verlag}, 678, 1978.

\bibitem[DH11]{DattaH11}
Nilanjana Datta and Min-Hsiu Hsieh.
\newblock The apex of the family tree of protocols: optimal rates and resource
  inequalities.
\newblock {\em New Journal of Physics}, 13(093042), 2011.

\bibitem[DHO14]{Oppenheim14}
Nilanjana Datta, Min-Hsiu Hsieh, and Jonathan Oppenheim.
\newblock An upper bound on the second order asymptotic expansion for the
  quantum communication cost of state redistribution.
\newblock http://arxiv.org/abs/1409.4352, 2014.

\bibitem[Dup10]{Frederic10}
Fr{\'e}d{\'e}ric Dupuis.
\newblock The decoupling approach to quantum information theory.
\newblock PhD Thesis, Universit{\'e} de Montr{\'e}al.,
  http://arxiv.org/abs/1410.0664, 2010.

\bibitem[DY08]{Devatakyard}
Igor Devetak and Jon Yard.
\newblock Exact cost of redistributing multipartite quantum states.
\newblock {\em Phys. Rev. Lett.}, 100(230501), 2008.

\bibitem[HHH99]{horodecki99}
Micha\l{} Horodecki, Pawe\l{} Horodecki, and Ryszard Horodecki.
\newblock General teleportation channel, singlet fraction, and
  quasidistillation.
\newblock {\em Phys. Rev. A}, 60:1888--1898, Sep 1999.

\bibitem[HJMR10]{HJMR10}
Prahladh Harsha, Rahul Jain, David Mc.Allester, and Jaikumar Radhakrishnan.
\newblock The communication complexity of correlation.
\newblock {\em IEEE Transcations on Information Theory}, 56:438--449, 2010.

\bibitem[HOW07]{horodecki07}
Micha{\l } Horodecki, Jonathan Oppenheim, and Andreas Winter.
\newblock Quantum state merging and negative information.
\newblock {\em Communications in Mathematical Physics}, 269(1):107--136, 2007.

\bibitem[IH08]{Portbased08}
Satoshi Ishizaka and Tohya Hiroshima.
\newblock Asymptotic teleportation scheme as a universal programmable quantum
  processor.
\newblock {\em Phys. Rev. Lett.}, 101:240501, Dec 2008.

\bibitem[IH09]{Portbased09}
Satoshi Ishizaka and Tohya Hiroshima.
\newblock Quantum teleportation scheme by selecting one of multiple output
  ports.
\newblock {\em Phys. Rev. A}, 79:042306, Apr 2009.

\bibitem[JRS03]{Jain:2003}
Rahul Jain, Jaikumar Radhakrishnan, and Pranab Sen.
\newblock A direct sum theorem in communication complexity via message
  compression.
\newblock In {\em Proceedings of the 30th international conference on Automata,
  languages and programming}, ICALP'03, pages 300--315, Berlin, Heidelberg,
  2003. Springer-Verlag.

\bibitem[JRS05]{Jain:2005}
Rahul Jain, Jaikumar Radhakrishnan, and Pranab Sen.
\newblock Prior entanglement, message compression and privacy in quantum
  communication.
\newblock In {\em Proceedings of the 20th Annual IEEE Conference on
  Computational Complexity}, pages 285--296, Washington, DC, USA, 2005. IEEE
  Computer Society.

\bibitem[LD09]{LuoD09}
Zhicheng Luo and Igor Devetak.
\newblock Channel simulation with quantum side information.
\newblock {\em IEEE Transactions on Information Theory}, 55:1331--1342, 2009.

\bibitem[Lin75]{lindblad75}
G.~Lindblad.
\newblock Completely positive maps and entropy inequalities.
\newblock {\em Commun. Math. Phys.}, 40:147--151, 1975.

\bibitem[MBD{\etalchar{+}}16]{Majenzetal}
Christian Majenz, Mario Berta, Fr{\'e}d{\'e}ric Dupuis, Renato Renner, and
  Matthias Christandl.
\newblock Catalytic decoupling of quantum information.
\newblock 2016.

\bibitem[NC00]{NielsenC00}
Michael~A. Nielsen and Isaac~L. Chuang.
\newblock {\em Quantum computation and quantum information}.
\newblock Cambridge University Press, Cambridge, UK, 2000.

\bibitem[Opp08]{oppenheim08}
Jonathan Oppenheim.
\newblock State redistribution as merging: introducing the coherent relay.
\newblock http://arxiv.org/abs/0805.1065, 2008.

\bibitem[Ren05]{Renner05}
Renato Renner.
\newblock Security of quantum key distribution.
\newblock PhD Thesis, ETH Zurich, Diss. ETH No. 16242, arXiv:quant-ph/0512258,
  2005.

\bibitem[Sch95]{Schumacher95}
Benjamin Schumacher.
\newblock Quantum coding.
\newblock {\em Phys. Rev. A.}, 51:2738--2747, 1995.

\bibitem[Sha]{Shannon}
Claude~Elwood Shannon.
\newblock A mathematical theory of communication.
\newblock {\em The Bell System Technical Journal}, 27:379--423.

\bibitem[Sti55]{stinespring55}
W.~F. Stinespring.
\newblock Positive functions on c*-algebras.
\newblock {\em Proceedings of the American Mathematical Society}, page
  211–216, 1955.

\bibitem[TCR10]{Tomamichel09}
Marco Tomamichel, Roger Colbeck, and Renato Renner.
\newblock Duality between smooth min- and max-entropies.
\newblock {\em IEEE Transactions on Information Theory}, 56(9):4674 -- 4681,
  2010.

\bibitem[Tom12]{Tomamichel12}
Marco Tomamichel.
\newblock A framework for non-asymptotic quantum information theory.
\newblock PhD Thesis, ETH Zurich, http://arXiv,org/abs/1203.2142, 2012.

\bibitem[Tou15]{Dave14}
Dave Touchette.
\newblock Quantum information complexity.
\newblock In {\em Proceedings of the Forty-Seventh Annual ACM on Symposium on
  Theory of Computing}, STOC '15, pages 317--326, New York, NY, USA, 2015. ACM.

\bibitem[Uhl76]{uhlmann76}
A.~Uhlmann.
\newblock The "transition probability" in the state space of a *-algebra.
\newblock {\em Rep. Math. Phys.}, 9:273--279, 1976.

\bibitem[Wat11]{Watrouslecturenote}
John Watrous.
\newblock Theory of {Q}uantum {I}nformation, lecture notes, 2011.
\newblock https://cs.uwaterloo.ca/~watrous/LectureNotes.html.

\bibitem[YBW08]{YeBW08}
Ming-Yong Ye, Yan-Kui Bai, and Z.~D. Wang.
\newblock Quantum state redistribution based on a generalized decoupling.
\newblock {\em Physical Review A}, 78(030302(R)), 2008.

\bibitem[YD09]{YardD09}
Jon~T. Yard and Igor Devetak.
\newblock Optimal quantum source coding with quantum side information at the
  encoder and decoder.
\newblock {\em IEEE Transactions on Information Theory}, 55:5339--5351, 2009.

\end{thebibliography}

\end{document}